\newtheorem{theorem}{Theorem}
\newtheorem{assumption}{Assumption}
\newtheorem{corollary}{Corollary}[theorem]
\newtheorem{remark}{Remark}[theorem]
\newcommand{\blam}{\bm{\lambda}}
\newcommand{\bLam}{\bm{\Lambda}}
\newcommand{\bone}{\bm{1}}
\newcommand{\bE}{\bm{E}}
\newcommand{\bxi}{\bm{\xi}}
\newcommand{\bdelta}{\bm{\delta}}
\begin{document}
\title{Isochronous Architecture-Based Voltage-Active Power Droop for Multi-Inverter Systems}
\author{Sourav~Patel,~\IEEEmembership{Student Member,~IEEE,}
        Soham~Chakraborty,~\IEEEmembership{Student Member,~IEEE,}
        Blake~Lundstrom,~\IEEEmembership{Senior Member,~IEEE,}
        Srinivasa~Salapaka,~\IEEEmembership{Senior Member,~IEEE,}
        and~Murti V. Salapaka,~\IEEEmembership{Fellow,~IEEE}
        \vspace{-0.9cm}
\thanks{S. Patel, S. Chakraborty, M. Salapaka are with the Department of Electrical and Computer Engineering, University of Minnesota, Minneapolis, 55455 MN, USA~ e-mail: patel292@umn.edu, chakr138@umn.edu, murtis@umn.edu}
\thanks{B. Lundstrom is with The National Renewable Energy Laboratory, Golden, 80401 CO, USA~
e-mail: blake.lundstrom@nrel.gov}
\thanks{S. Salapaka is with the Department of Mechanical Science and Engineering, University of
Illinois at Urbana-Champaign, 61801 IL, USA~
e-mail: salapaka@illinois.edu}
\thanks{The authors acknowledge the support of ARPA-E for supporting this research through the project titled ‘A Robust Distributed Framework for Flexible Power Grids’ via grant no. DE-AR0000701.}
}
\maketitle
\begin{abstract}
Advanced microgrids consisting of distributed energy resources interfaced with multi-inverter systems are becoming more common. Consequently, the effectiveness of voltage and frequency regulation in microgrids using conventional droop-based methodologies is challenged by uncertainty in the size and schedule of loads. This article proposes an isochronous architecture of parallel inverters with only voltage-active power droop (VP-D) control for improving active power sharing as well as plug-and-play of multi-inverter based distributed energy resources (DERs). In spite of not employing explicit control for frequency regulation, this architecture allows even sharing of reactive power while maintaining reduced circulating currents between inverters. The performance is achieved even when there are mismatches between commanded reference and power demanded from the actual load in the network.
The isochronous architecture is implemented by employing a global positioning system (GPS) to disseminate the clock timing signals that enable the microgrid to maintain nominal system frequency in the entire network. This enables direct control of active power through voltage source inverter (VSI) output voltage regulation, even in the presence of system disturbances. A small signal eigenvalue analysis of a multi-inverter system near the steady-state operating point is presented to evaluate the stability of the multi-inverter system with the proposed VP-D control. Simulation studies and hardware experiments on an 1.2 kVA prototype are conducted. The effectiveness of the proposed architecture towards active and reactive power sharing between inverters with load scenarios are demonstrated. Results of the hardware experiments corroborate the viability of the proposed VP-D control architecture.
\end{abstract}
\begin{IEEEkeywords}
Common-clock/GPS, low-voltage multi-inverter microgrid, virtual impedance, voltage active power droop, voltage source inverter 
\end{IEEEkeywords}
{}
\definecolor{limegreen}{rgb}{0.2, 0.8, 0.2}
\definecolor{forestgreen}{rgb}{0.13, 0.55, 0.13}
\definecolor{greenhtml}{rgb}{0.0, 0.5, 0.0}
\vspace{-0.55cm}
\section{Introduction}
\IEEEPARstart{M}{odernization} of rapidly dispatchable DERs to provide demand response and ancillary services are transforming emergent microgrids into advanced microgrids. Advanced microgrids enable additional flexibility, resilience and reliability for local resources as well as support of the large scale grid when connected \cite{marzal_current_2018}. These microgrids can be operated in both grid-connected as well as islanded modes. In the islanded mode, DERs support local loads in the microgrid through either centralized, distributed or de-centralized control architectures. Controllers employed either locally at individual DERs or at a microgrid level are responsible for stabilizing microgrid voltage \cite{vasquez2009voltage}, maintaining power quality with plug-and-play capabilities \cite{blaabjerg2006overview}, apportioning load between various distributed generations (DGs) \cite{patel_distributed_2017}, and primary frequency response support \cite{lundstrom_fast_2018}. A number of hierarchical control schemes achieve one or more of these objectives \cite{guerrero_hierarchical_2011}.
\par A distributed and communication-less approach of controlling inverters is typically achieved through droop control methods. These methods use measurements obtained locally at the DERs and leverage the relationships between system voltage, line frequency, and power supplied by the DERs to ensure regulation of voltage and frequency in the network \cite{chandorkar_control_1993}. In microgrids with high line inductance (high $X/R$ ratio), DERs are generally controlled by conventional $P-f$ and $Q-V$ droop control \cite{guerrero2005output, yao2011design}, where $P$, $Q$, $f$, and $V$ are active power, reactive power, frequency, and voltage respectively. The droop strategies in microgrids are motivated by the the autonomous control of synchronous generators, with inherent large rotating inertia directly coupled with the network that primarily rely on conventional droop strategies for regulation of system voltage and frequency. However, low voltage (LV) microgrids use more converter-interfaced DERs, and thus, exhibit low natural inertia with significant $R/X$ ratio because of the small spatial expanse of the networks \cite{vandoorn2012automatic}. Consequently, non-conventional $P-V$ and $Q-f$ droop control strategies \cite{vandoorn2012automatic, sao2008control} have gained attention for LV microgrids with the advantages of direct voltage control \cite{engler_droop_2005} and improved harmonic power sharing \cite{guerrero_resistive_2007} compared to conventional droop control. Moreover, in order to improve transient performance and leverage decoupling compensations for droop controlled MMG, there are several works in the literature concentrating on modifying the droop characteristic equations \cite{droopnew1,droopnew2,droopnew3,droopnew4,droopnew5}. However, these  more complicated droop characteristic equations are difficult to  analyze, give stability guarantees and difficult to incorporate in applications.
\par In traditional droop based strategies, with the frequency related droop strategy employed, deviation of microgrid frequency caused by mismatches between load and generation is inevitable. Here, LV microgrids with low inertia converter-interfaced DERs, the frequency may deviate considerably from nominal  \cite{guerrero2005output,6200347,majumder_angle_2009}. Typically, a secondary controller, operating at a slower time-scale than the autonomous droop based response, utilizes communication between DERs to restore the steady state system frequency to nominal \cite{shafiee2014distributed}; however, it is difficult to mitigate the transient frequency deviation caused by sudden load changes especially in islanded microgrids. This regulation becomes even more difficult in the presence of uncertain renewable energy resources along with the added complexity of synchronizing DERs in a parallel inverter network. Other limitations in traditional droop methods arise from the trade-offs between accurate power sharing on one hand and the desire for stringent regulation of system voltage and frequency on the other. These trade-offs result from mismatches in the commanded reference power and actual load as well as the output impedance of VSI and line parameters resulting in large circulating currents.

\par In  this  article,  we  propose  using  only  VP-D  control  for voltage  regulation  and  address  frequency  regulation  by  providing  a  common  timing  signal  from  a  GPS  receiver  to  all the inverters. This enables generation of a common frequency reference for all the inverters and avoids the need for explicit frequency control; moreover, the framework obviates the need for   communication between inverters. In contrast  to  existing  GPS-based  methods  \cite{golsorkhi2017gps,riverso2015plug}, our approach simultaneously  achieves  multiple  objectives  including  reduced circulating  currents  between  VSIs  with  better  voltage  regulation,  reduced  complexity  due  to  the  absence  of  PLL  required  for  frequency  synchronization. Accordingly,  the  main  contributions  of  the proposed isochronous architecture are summarized as follows.\\
1)  Access  to  common  clock pulses in this architecture enables operation of the  Multi-inverter MicroGrid (MMG) at  a  single  frequency  even  in  the  presence  of system disturbances. VP-D control decouples voltage regulation objectives  from frequency variations thereby alleviating issues such as frequency dependent control resulting in poor voltage regulation and poor disturbance rejection. This architecture also enables improved  start  up  transients, reduced circulating currents between participating VSIs and easy plug-and-play of  VSIs in the MMG. \\
2) The   architecture   does   away   with   issues   stemming from   active   frequency-regulation   in   full-droop   methods.   VP-D  control  allows  direct  control  of  active  power  with regulation  of  voltage  magnitude.  The  resulting  control  law guarantees that the parallel inverters share reactive power such that  deviations  from  desired  reactive  power  sharing  remain bounded while serving a common load even without explicitly controlling the reactive power flow. The resulting controller design is simpler because no explicit $Q-f$ control  is  required.\\
3) The  proposed  framework  facilitates  a  comprehensive  analysis  that  provides: i) delineation analytically of how circulating current between VSIs depends on output voltage regulation error; ii)  small-signal  based  stability  analysis  of  the  ac  microgrid  by  having access to the common clock signal to generate $\alpha-\beta$ components without added complexities of PLL dynamics. These provide useful design  guidelines  for  selecting  droop  gains and virtual output resistances to maintain system stability\\
Results show that the proposed architecture exhibits improved  transients  over  full-droop  implementations  during start-up  and  different  plug-and-play  scenarios. For scenarios with symmetric active, reactive power sharing, an improvement of $96\% - 101\%$ is observed in the magnitude of circulating current between any two inverters, reactive power sharing ratio accuracy improved from $2.61\%-8.59\%$ to be within $0.68\%-1.07\%$ of desired and transient response time from 0.1238 s to 0.02 s. Voltage regulation error at PCC improved from $5.71\%$  to $< 0.01\%$ (of nominal) during plug-and-play of the MMG.
Experimental results  are  provided  for  a  system  of  two  0.6  kVA  inverters under  different  load  variations  and  sharing  ratios to  validate  the  feasibility  and  effectiveness  of  the  proposed isochronous architecture.\\
The rest of the paper is organized as follows. Section \ref{systemdesc} describes the system and control architecture. Section \ref{analysis} provides analytical results for the multi-inverter system, and Section \ref{results} presents simulation and experimental validation on a two-inverter microgrid network. Finally, Section \ref{conclusion} summarizes observations and insights that highlight this work. 

\begin{figure}[t]
\centering
\includegraphics[scale=0.7,trim={11.7cm 9cm 1.5cm 1cm},clip]{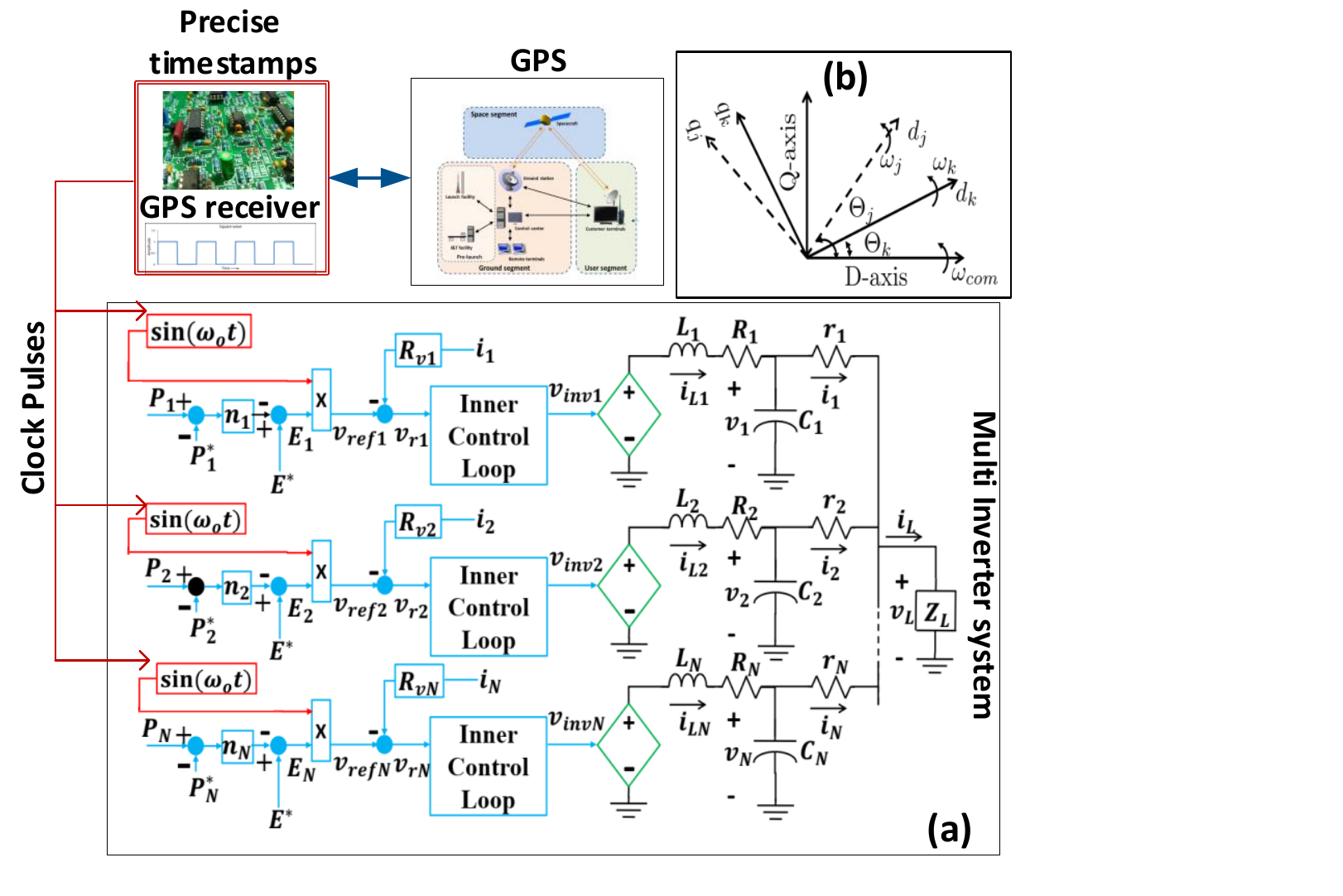}
\caption{(a) A parallel $N$-inverter MMG with outer-droop and inner multi-loop controllers receiving GPS signals \cite{packard1996gps} to disseminate clock signals for frequency regulation, (b) representation of VSI $j$ and VSI $k$ with respect to common reference frame.}\label{fig:multiinverter}
\end{figure}
\section{System Description and Controller Design}\label{systemdesc}
This section provides the details of an LV resistive MMG network containing $N$ parallel VSIs, serving a common load, $Z_L$, connected at a single point of common coupling (PCC) as shown in Fig. \ref{fig:multiinverter}. Each inverter in the MMG includes a communication module that receives synchronized clock pulses from the GPS segment \cite{packard1996gps} through a GPS receiver. These clock pulses are then used by the proposed outer VP-D control to provide a synchronized sinusoidal reference voltage signal to inner-loop controllers located at each VSI forming the isochronous system architecture. We begin our discussion by describing first the dynamics of the $k^{th}$ single inverter system and the controllers employing the proposed control law for achieving output voltage regulation and load disturbance rejection through a cascaded multi-loop control architecture. 
\subsection{Single Inverter System with multi-loop control design}\label{controller_design}
A single phase H-bridge inverter with an output $LC$ filter is connected to load $Z_L$ through a line resistance $r_k$. The VSI is modeled as a controllable voltage source, $V_{invk}$, by employing an average model of the inverter \cite{yazdani_voltage-sourced_2010}. The dynamics of the $k^{th}$ VSI are described as:
\[L_k \frac{di_{Lk}}{dt} +R_ki_{Lk} = v_{inv k} - v_k, \hspace*{4mm} 
C_k \frac{dv_k}{dt} =  i_{Lk} - i_k\] 
where $v_{invk}$, $i_{Lk}$, $v_{k}$, and $i_{k}$ are average values over one switching cycle ($T_s$) of control signal, inductor current, output voltage, and output current, respectively (See Fig. \ref{fig:invertercontrol}). Taking a Laplace transform of these equations and eliminating $I_{Lk}(s)$, the open-loop averaged output voltage dynamics of the $k^{th}$ inverter are given as, $V_k(s)=$
\begin{align}\label{eq3} \scriptstyle
     \underbrace{\textstyle\frac{1}{L_k C_k s^2 +R_k C_k s +1}}_{G_{v_{invk}}}V_{invk}(s) - \scriptstyle\underbrace{\textstyle\frac{(L_k s+R_k)}{L_k C_k s^2 +R_k C_k s +1}}_{G_{ik}}I_k(s)
\end{align}
\begin{figure}[t]
\centering
\includegraphics[scale=0.36,trim={4.8cm 7.2cm 4.2cm 3.5cm},clip]{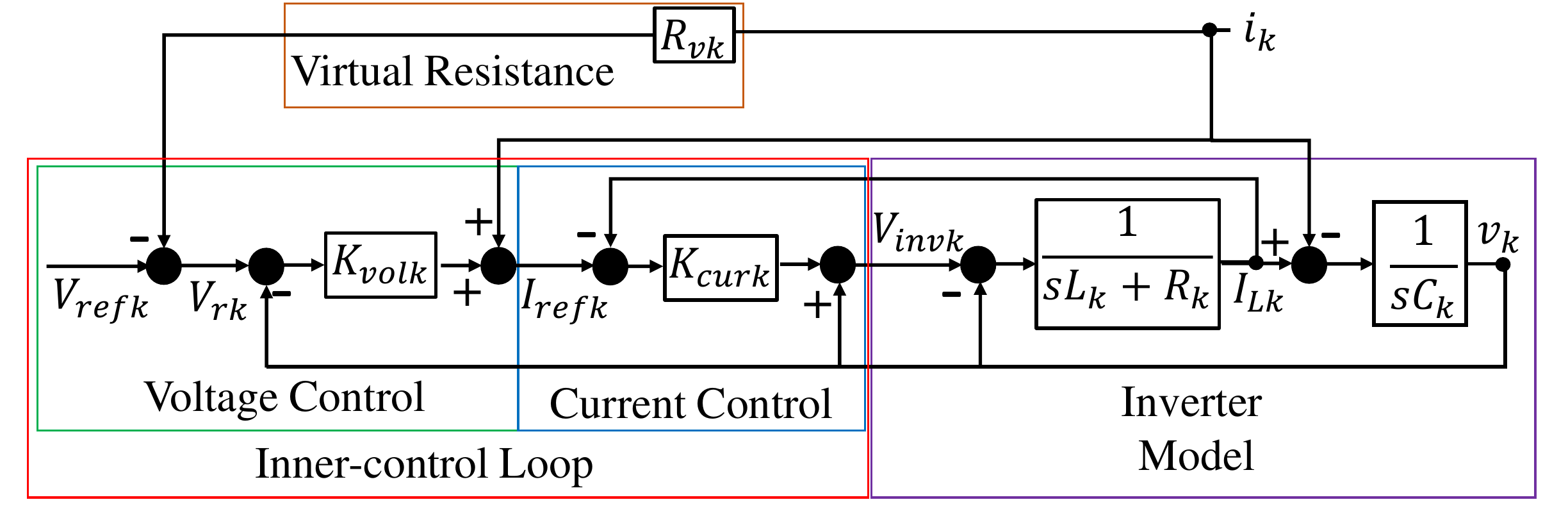}
\caption{The inner-control loop for voltage regulation with a virtual
resistance.}\label{fig:invertercontrol}
\end{figure}
The control design objectives are to: i) regulate the output voltage of the inverter ($V_{k}$) to track a reference signal ($V_{rk}$); ii) regulate the output inductor current of the inverter ($I_{Lk}$) to track a reference signal ($I_{refk}$); iii) reject disturbances due to load variations. We propose the following control law for the $k^{th}$ inverter:
\begin{align}\label{eq4}
V_{invk}(s) = K_{refk}V_{refk}(s) - K_{vk}V_k(s)  + K_{ik}I_k(s)
\end{align}
where $K_{refk}=K_{curk}K_{volk}$, $K_{vk}=K_{refk} + K_{curk}sC_k -1$, and $K_{ik}=-K_{refk}R_{vk}$.
As shown in Fig. \ref{fig:invertercontrol}, the control law takes the form of cascaded voltage and current controllers.The outer voltage controller, $K_{volk}$, acts on a sinusoidal reference signal provided by the outer droop law with the incorporation of a voltage drop, $i_kR_{vk}$, on a virtual resitance, $R_{vk}$. A feed-forward of the line current is used to minimize inrush currents and improve the transient response of the outer voltage controller. The output of the voltage controller, $I_{refk}$, provides the sinusoidal reference current signal to be tracked by the inductor current which generates the controllable voltage signal, $V_{invk}$, for the $k^{th}$ VSI. The main advantages of the inner-outer loop structure are the decoupled design of the fast inner current controller and the slow outer voltage controller as well as the increased robustness of the VSI to load variations.
By combining \eqref{eq3} and \eqref{eq4}, the closed-loop averaged output voltage dynamics of $k^{th}$ inverter are given by:
\begin{align}\label{eq5}
V_k(s) &= G_k\underbrace{(V_{refk}(s)-R_{vk}I_k(s))}_{V_{rk}(s)} - Z_kI_k(s) 
\end{align}
where $G_k=S_kG_{v_{invk}}K_{refk}$, $S_k=(1+G_{v_{invk}}K_{vk})^{-1}$ is the sensitivity transfer function,  $T_k=1-S_k$ is the complementary transfer function, and 
$Z_k(s)=S_kG_{ik}$. Note that, 
$Z_k(j\omega_0) := (1 + G_{v_{invk}}K_{vk})^{-1}G_{ik}\vert_{s=j\omega_0}$ is the equivalent Thevenin output impedance of the $k^{th}$ inverter at the fundamental frequency, $\omega_0=2\pi60$ rad/s. 
The primary control objectives thus translate to $G_k(j\omega_0)\approx 1$ for voltage regulation and the closed-loop inner- current loop transfer function from  $I_{refk}$ to $I_{Lk}$  be such that  $T_{I_{refk}\rightarrow I_{Lk}(j\omega_0)}\approx 1$ and $Z_k(j\omega_0)\ll 1$ for  rejecting the effects of load disturbances (See Fig. \ref{figfig}).\\
 The current-control and voltage-control designs are based on the internal model principle \cite{yazdani_voltage-sourced_2010,li_control_2009}, where the  controllers $K_{curk}$ and $K_{volk}$ are designed to have poles  $s=\pm j\omega_0$ and loop-shaping-based design to cancel the  pole located at  $s=R_k/L_k$ by the current controller. Moreover, this design ensures proper separation in bandwidth of both the controllers (current control loop has the larger bandwidth), which enables the inner current loop to make the inductor current $I_{Lk}$ to  track the reference current $I_{refk}$.
 

An output virtual resistance loop \cite{guerrero_resistive_2007} is also adopted to alleviate the sensitivity of active power sharing to the line and output impedances. This can be achieved by designing the virtual resistance, $R_{vk}$, to dominate the line impedance of the $k^{th}$ VSI, rendering the design insensitive to line and filter parameters. However, $R_{vk}$ is bounded above to ensure VSI output voltage regulation to be within prescribed limits as larger $R_{vk}$ results in larger regulation error by reducing $V_{rk}$. 
\begin{figure}[t]
\centering
\includegraphics[scale=0.18]{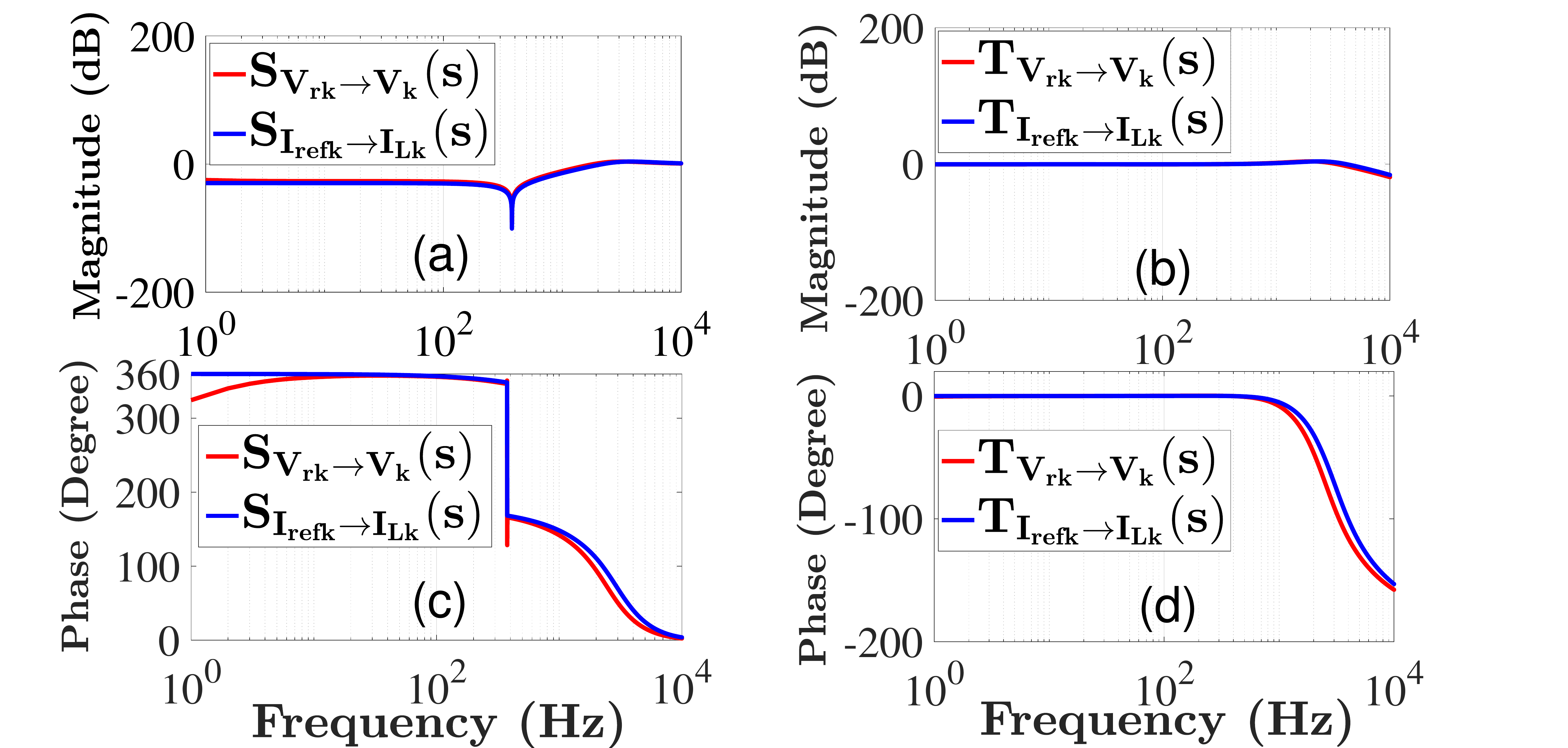}\label{fig:magSvSi}
\caption{Bode plots of sensitivity and complementary sensitivity transfer function of voltage and current controllers.}
\label{figfig}
\end{figure}
\begin{assumption}\label{assmp:VfollowVREFandI}
The proposed control design shown in Fig. \ref{fig:invertercontrol} satisfies the following properties: 
\begin{enumerate}
    \item  $v_{refk}(t)$ and $i_k(t)$ are sinusoidal with frequency $\omega_0$.
     \item  $K_{volk}(s)$ has poles at $\pm j\omega_0$, that is, $\dfrac{1}{s^2+\omega_0^2} $ is a factor in $K_{volk}(s)$.
     \item  The closed-loop system from reference voltage, $v_{rk}(s)$, to output voltage, $v_k(s)$, is stable.
\end{enumerate}
\end{assumption}
The following theorem shows that if all external inputs (i.e., $v_{refk}(t)$ and $i_k(t)$) to the system are sinusoidal with frequency $\omega_0$, then the inverter voltage follows the sinusoidal reference $v_{rk}(t)$ with zero steady-state error. 
\begin{theorem}\label{thm:VfollowVREFandI}
Under Assumption \ref{assmp:VfollowVREFandI}, the closed loop system in steady state satisfies, 
\begin{equation*}
      v_k(t) = v_{rk}(t) = v_{refk}(t) - R_{vk} i_k(t).
\end{equation*}
Furthermore, $G_k(j\omega_0)=1$ and $Z_k(j\omega_0)=R_{vk}$ in \eqref{eq5}.
\end{theorem}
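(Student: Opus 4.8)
The plan is to prove this by the internal model principle, the same design philosophy the controllers are built on. The key structural fact I would exploit is that, by the second property of Assumption~\ref{assmp:VfollowVREFandI}, the voltage controller $K_{volk}$ contains the resonant factor $1/(s^2+\omega_0^2)$; I would propagate this factor through the closed-loop transfer functions in \eqref{eq5}, establish two frequency-domain identities at $s=j\omega_0$, and then pass to the time-domain steady-state statement using the sinusoidal inputs (first property) together with closed-loop stability (third property).

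First I would show that the sensitivity vanishes at the fundamental frequency. Since $K_{refk}=K_{curk}K_{volk}$ and $K_{vk}=K_{refk}+K_{curk}sC_k-1$, both $K_{refk}$ and $K_{vk}$ inherit the poles of $K_{volk}$ at $s=\pm j\omega_0$. Hence the loop gain $L_k:=G_{v_{invk}}K_{vk}$ has a pole there, and $S_k(j\omega_0)=(1+L_k)^{-1}\vert_{j\omega_0}=0$ (using that $G_{v_{invk}}(j\omega_0)\neq 0$ for the given second-order filter). Equivalently, the complementary sensitivity obeys $T_k(j\omega_0)=1-S_k(j\omega_0)=1$.

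Next I would evaluate the two transfer functions in \eqref{eq5} at $s=j\omega_0$. For the load path, $Z_k=S_kG_{ik}$ with $G_{ik}$ finite at $\omega_0$, so $S_k(j\omega_0)=0$ forces $Z_k(j\omega_0)=0$; the intrinsic output impedance is annihilated and the only impedance the load sees in \eqref{eq5} is the virtual resistance $R_{vk}$, which is the sense in which the effective output impedance equals $R_{vk}$. For the reference path I would rewrite $G_k=S_kG_{v_{invk}}K_{refk}$ using the identity $K_{refk}=K_{vk}+1-K_{curk}sC_k$, obtaining $G_k=S_kL_k+S_kG_{v_{invk}}(1-K_{curk}sC_k)=T_k+S_kG_{v_{invk}}(1-K_{curk}sC_k)$. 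The first term equals $T_k(j\omega_0)=1$ and the second vanishes because $S_k(j\omega_0)=0$ while the remaining factors stay finite, so $G_k(j\omega_0)=1$. \emph{This decomposition is the crux}: it resolves the apparent $0\cdot\infty$ indeterminacy of the product $S_kK_{refk}$ at resonance, and I expect it to be the one step requiring genuine care, since naively each factor either blows up or collapses.

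Finally I would convert these identities into the asserted time-domain equality. By the first and third properties of Assumption~\ref{assmp:VfollowVREFandI}, the exogenous signals $v_{refk}(t)$ and $i_k(t)$ are sinusoids of frequency $\omega_0$ and the map $v_{rk}\mapsto v_k$ is stable, so the transient decays and the steady-state output is the phasor response read off by evaluating the transfer functions at $s=j\omega_0$. Substituting $G_k(j\omega_0)=1$ and $Z_k(j\omega_0)=0$ into \eqref{eq5} yields, in steady state, $v_k(t)=\bigl(v_{refk}(t)-R_{vk}i_k(t)\bigr)-0=v_{refk}(t)-R_{vk}i_k(t)=v_{rk}(t)$, which is the claim. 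Once $S_k(j\omega_0)=0$ and $G_k(j\omega_0)=1$ are in hand, this last step is routine, so the entire difficulty is concentrated in the resonance bookkeeping of the preceding paragraph.
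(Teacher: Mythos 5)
Your overall strategy (frequency-domain identities at $s=j\omega_0$ plus a phasor steady-state argument under closed-loop stability) is sound, and your reading of ``$Z_k(j\omega_0)=R_{vk}$'' as a statement about the effective Thevenin impedance $G_kR_{vk}+S_kG_{ik}$ is exactly what the paper's appendix establishes (there the symbol $Z_k$ is silently redefined to be this total impedance). However, the step you yourself single out as the crux contains a genuine gap. You argue that $S_kG_{v_{invk}}(1-K_{curk}sC_k)$ vanishes at $j\omega_0$ ``because $S_k(j\omega_0)=0$ while the remaining factors stay finite.'' They do not stay finite under the paper's design: Section II-A states that \emph{both} $K_{curk}$ and $K_{volk}$ are designed with poles at $s=\pm j\omega_0$ (the experimentally used $K_{curk}$ has denominator $(s+6468)(s^2+(2\pi 60)^2)$), so the factor $1-K_{curk}sC_k$ blows up at $j\omega_0$, and your decomposition reinstates precisely the $0\cdot\infty$ indeterminacy it was meant to resolve. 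Assumption \ref{assmp:VfollowVREFandI} constrains only $K_{volk}$, so a proof of the theorem must cover the resonant-$K_{curk}$ case --- it is the case the paper actually implements.

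The gap is fixable with one more order-counting step: when $K_{curk}$ also carries the resonant factor, the loop gain $G_{v_{invk}}K_{vk}$ has a \emph{double} pole at $j\omega_0$ (through the product $K_{curk}K_{volk}$ inside $K_{vk}$), hence $S_k$ has a zero of order two there, which dominates the simple pole of $1-K_{curk}sC_k$; the product still tends to zero and your conclusion survives. The paper sidesteps this bookkeeping by a different decomposition: it rewrites $G_k=\frac{G_{pk}K_{volk}}{1+G_{pk}K_{volk}}$ with $G_{pk}=\frac{1}{C_ks}\cdot\frac{G_{ck}K_{curk}}{1+G_{ck}K_{curk}}$ and $G_{ck}=\frac{1}{L_ks+R_k}$, i.e., it absorbs $K_{curk}$ into the closed inner current loop, which is benign at $j\omega_0$ whether or not $K_{curk}$ has resonant poles; then $1-G_k$ visibly carries the factor $s(s^2+\omega_0^2)$ in its numerator, giving $G_k(j\omega_0)=1$, and the identity $Z_k=G_kR_{vk}+(1-G_k)\frac{1}{sC_k}\cdot\frac{1}{1+G_{ck}K_{curk}}$ then yields $Z_k(j\omega_0)=R_{vk}$. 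As a side remark, your phasor-based steady-state conversion is actually cleaner than the paper's invocation of the final value theorem, which is legitimate there only because the $(s^2+\omega_0^2)$ factors cancel the reference poles; that part of your argument is fine once the resonance bookkeeping above is repaired.
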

\begin{proof}
 See Appendix.
\end{proof}
\subsection{Outer Voltage-Active Power Droop}
The sinusoidal voltage reference signal, $V_{refk}(s)$, used by the inner multi-loop controllers is generated by the proposed VP-D control law. The proposed droop law does not facilitate any explicit control on reactive power ($Q$) unlike full droop control methodologies where active and reactive power are controlled separately by controlling voltage and frequency respectively. The VP-D law for the $k^{th}$ VSI is 
\begin{equation}\label{eq:Voltage_droop}
    E_k = E^{*} - n_k (P_k - P^{*}_k),
\end{equation} where $E_k$, $E^{*}$, $n_k$, $P_k$, and $P^{*}_k$ are inverter output voltage magnitude, reference output voltage magnitude, droop coefficient, measured output active power, and reference active power of the $k^{th}$ inverter. Here, $v_{refk}(t) = E_k \sin (\omega_0t)$. To control $E_k$, it is necessary to measure $P_k$. This is usually achieved by measuring instantaneous power (i.e., $p_k = v_k(t)i_k(t)$) and passing it through a first order low pass filter (LPF) with bandwidth $\omega_P$ which is designed to be much smaller than the voltage control loop. Reference \cite{salapaka_viability_2014-2} proposes a framework for obtaining a near-exact solution for determining the active power for a single VSI serving a load.
\subsection{Isochronous Architecture }
An isochronous architecture can be achieved by providing a common timing signal to all inverters in the network. The advantages of such an architecture over frequency droop are delineated earlier in this article.
GPS is used to obtain absolute time, time interval and frequency with precision up to a nanosecond (ns). GPS receiver modules have the ability to measure high precision timing signals which are readily available and inexpensive. This makes them an ideal candidate for time synchronization applications. It is assumed here that there is accurate and precise synchronization of the real-time clocks of the networked distributed devices to the common clock and that the clock synchronization protocol is in compliance with IEEE 1588-2008 standards \cite{4579760}.
\par It is imperative that such an architecture be robust to any momentary loss of access to the common clock by one or more inverters in the system. Such a protocol is assumed and its mitigation is not a point of discussion for this article. However, mitigation of such scenarios can be achieved by employing a corrective mechanism, for example by employing a local clock at each VSI for satisfactory operation. This is possible since a local clock can provide a reasonable estimate of GPS time signal for tens of seconds, eventually shifting the operating point in response to load changes causing degradation in power system performance. For conditions of continued signal loss, the VSI unit cannot participate in system wide load sharing and should be isolated from the rest of the system. Methodologies such as radio reference signal could be used if there is sufficient concern about system wide loss of GPS reference signal \cite{majumder2009improvement}. Similarly, \cite{bellini_robust_2015} utilizes three different PLL structures to guarantee robustness versus delay or loss in communication to synchronize the phase and the frequency of all of the devices connected to the microgrid.
\section{Multi-Inverter System Isochronous Operation}\label{analysis}
\subsection{Performance of Individual Inverters}
In this section, the performance of the multi-inverter system with the proposed VP-D approach is analyzed. 
A benefit of the VP-D approach is that reactive power sharing among individual inverters is maintained within a bound without having control of reactive power explicitly defined as a control objective. Theorem \ref{thm:noQflowRload} provides analytical expressions for output voltage and line currents of networked VSI units.
\begin{theorem}\label{thm:noQflowRload}
If the conditions of Assumption \ref{assmp:VfollowVREFandI} hold, then for any complex load $Z_L$ for the network shown in Fig. \ref{fig:multiinverter}, the steady-state voltage and current at the output of the $k^{th}$ inverter are, respectively, $v_k(t)=V_k\sin({\omega_0t+\psi_k})$ and $i_k(t)=I_k\sin({\omega_0t+\phi_k})$, where,
\begin{align*}
I_k = \frac{\gamma_k}{r_k+R_{vk}}\sum_{m=1}^N \frac{E_m}{r_m+R_{vm}}, &\hspace*{2mm}
\tan\phi_k = \frac{\sin\angle\alpha}{\cos\angle\alpha - \frac{\beta_k}{|\alpha|}}\\
V_k = \sqrt{(E_k - R_{vk}I_k)^2 + (R_{vk}I_k)^2}, &\hspace*{1mm} \tan\psi_k = \frac{\sin\phi_k}{\cos\phi_k - \frac{E_k}{R_{vk}I_k}}
\end{align*}
where $\gamma_k$, $\beta_k$, and $\alpha$ are defined in the proof.
\end{theorem}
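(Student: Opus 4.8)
The plan is to collapse the entire network to a phasor circuit at the fundamental frequency $\omega_0$ and then solve a single nodal equation at the PCC. By Theorem~\ref{thm:VfollowVREFandI}, in steady state the $k$th inverter satisfies $v_k(t)=v_{refk}(t)-R_{vk}i_k(t)$ with $v_{refk}(t)=E_k\sin(\omega_0 t)$, so each inverter is exactly an ideal sinusoidal EMF of amplitude $E_k$ behind the purely resistive virtual impedance $R_{vk}$. The key structural fact is that the GPS-disseminated clock forces a common time reference, so every $v_{refk}$ shares the same phase and all source phasors are real, $E_k\angle 0$. First I would pass to phasors and represent inverter $k$ by the real EMF $E_k$ in series with $r_k+R_{vk}$ (line plus virtual resistance), all branches meeting at the single PCC node to which the complex load $Z_L$ is attached.

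Next I would impose Kirchhoff's current law at the PCC. Writing $\alpha$ for the unknown PCC voltage phasor, the branch currents are $I_k=(E_k-\alpha)/(r_k+R_{vk})$ and the load current is $\alpha/Z_L$, so KCL is the single scalar complex equation $\sum_k (E_k-\alpha)/(r_k+R_{vk})=\alpha/Z_L$. Solving for $\alpha$ yields Millman's expression
\[
\alpha=\frac{\sum_{m} E_m/(r_m+R_{vm})}{\,1/Z_L+\sum_m 1/(r_m+R_{vm})\,}.
\]
Because the star topology leaves exactly one independent node, this is the only genuinely coupled step; note that $\alpha$ is complex precisely because $Z_L$ is, even though every $E_k$ is real.

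The remainder is back-substitution together with extraction of moduli and arguments. Setting $\beta_k:=E_k$ (real) and $S:=\sum_m E_m/(r_m+R_{vm})$, the branch-current phasor is $I_k=(\beta_k-\alpha)/(r_k+R_{vk})$; its argument gives $\phi_k=\angle(\beta_k-\alpha)$, and writing $\alpha=|\alpha|e^{j\angle\alpha}$ reproduces $\tan\phi_k=\sin\angle\alpha/(\cos\angle\alpha-\beta_k/|\alpha|)$, while its modulus, after factoring out $S$ and defining $\gamma_k:=|\beta_k-\alpha|/S$, gives $I_k=\gamma_k S/(r_k+R_{vk})$. For the output voltage I would treat $V_k=E_k-R_{vk}I_k$ as a phasor and resolve the drop $R_{vk}I_k$ into its component co-phasal with the reference EMF and its quadrature component; then $V_k$ has in-phase part $E_k$ minus the co-phasal drop and quadrature part equal to the quadrature drop, so its modulus gives the square-root expression for $V_k$ and its argument gives $\tan\psi_k$ as a function of $\phi_k$.

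The derivation is routine once the phasor reduction is in place, so the main obstacle is justification and bookkeeping rather than hard analysis. The delicate point is invoking Theorem~\ref{thm:VfollowVREFandI} to legitimately treat each inverter as an ideal real-phase EMF behind $R_{vk}$ for an arbitrary complex load: one must argue that the common-clock architecture preserves a sinusoidal, single-frequency steady state so that Assumption~\ref{assmp:VfollowVREFandI} applies regardless of $Z_L$. The other care is keeping the real-versus-complex accounting straight while extracting magnitudes and phases, since all nontrivial phase content is funnelled through the single complex quantity $\alpha$ and the load admittance $1/Z_L$; choosing the common EMF as the phase reference is exactly what keeps $\beta_k$ real and makes the final expressions compact.
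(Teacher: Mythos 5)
Your proof is correct and reaches the same formulas, but it solves the central network step by a genuinely different (and more elementary) device than the paper. Both arguments start identically: Theorem \ref{thm:VfollowVREFandI} reduces each inverter to an EMF $E_k$ (common phase, thanks to the shared clock) behind $R_{vk}$, in series with $r_k$. From there the paper stays in the Laplace domain and in matrix form: it writes the network through the admittance matrix $\mathbf{Y}(s)=\mathbf{\Lambda}-h(s)\pmb{\lambda}\pmb{\lambda}^{T}$, obtains $\underline{\mathbf{I}}=(\mathbf{Y}^{-1}+\mathbf{R_v}I)^{-1}\bE\mu(s)$, and inverts the rank-one-perturbed matrix with the Woodbury identity, yielding $\underline{\mathbf{I}}=\bLam_v\bE\mu-\alpha\blam_v\blam_v^{T}\bE\mu$ with $\alpha^{-1}=Z_L^{-1}+\sum_m(r_m+R_{vm})^{-1}$. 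You instead observe that the reduced circuit is a star with a single independent node, so one scalar KCL equation at the PCC (Millman's theorem) does all the work; Millman is exactly the scalar shadow of the Sherman--Morrison/Woodbury step here, so the computations are equivalent, with yours shorter and carrying the nicer physical reading of $\alpha$ as the PCC voltage phasor, while the paper's matrix formulation is the one that plugs directly into its equations \eqref{eq6}--\eqref{eq8} and is reused for Theorem \ref{thm:diffInPhaseOfCurrents}. Three bookkeeping remarks. First, your $\alpha$ and $\beta_k$ differ from the paper's by the common positive real factor $S=\sum_m E_m/(r_m+R_{vm})$ (the paper's $\alpha$ is the parallel impedance and its $\beta_k=E_k/S$); since $S>0$ is real, $\angle\alpha$, the ratio $\beta_k/|\alpha|$, and $\gamma_k$ are unchanged, so both sets of definitions satisfy the stated identities, which is acceptable because the theorem defers those definitions to the proof. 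Second, your phasor resolution gives $V_k=\sqrt{(E_k-R_{vk}I_k\cos\phi_k)^2+(R_{vk}I_k\sin\phi_k)^2}$, which matches the paper's own proof; the missing trigonometric factors in the theorem statement are a discrepancy internal to the paper, not a flaw in your argument. Third, the point you flag as delicate, namely justifying a single-frequency sinusoidal steady state for arbitrary $Z_L$, is not actually an obstacle: Theorem \ref{thm:noQflowRload}, like the paper's proof, takes Assumption \ref{assmp:VfollowVREFandI} (sinusoidal $v_{refk}$ and $i_k$ at $\omega_0$, closed-loop stability) as a hypothesis rather than something to be established.
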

\begin{proof}
See Appendix.
\end{proof}
\color{black}
\begin{corollary}\label{cor2.1}
In the case of purely resistive load ,i.e., $Z_L=R_L$, the steady state voltage and current at the output of the $k^{th}$ inverter are
$v_k(t)=V_k\sin\omega_0t$ and
$i_k(t)=I_k\sin\omega_0t$ for $V_k, I_k \in {\rm I\!R}_+$. In other words, phase lags $\phi_k = \psi_k = 0$.
\end{corollary}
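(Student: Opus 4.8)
The plan is to specialize the general expressions of Theorem~\ref{thm:noQflowRload} to the case $Z_L = R_L \in \mathbb{R}_+$ and show that both phase angles collapse to zero. The essential observation is that the complex quantity $\alpha$ appearing in the formula for $\tan\phi_k$ is, up to a real scaling, the phasor of the PCC voltage, and that for a purely resistive network driven by in-phase sources this phasor is real.

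First I would reconstruct the phasor relations at the fundamental frequency $\omega_0$ from Theorem~\ref{thm:VfollowVREFandI}. Writing $\hat{v}_k$ and $\hat{i}_k$ for the phasors of $v_k(t)$ and $i_k(t)$, Theorem~\ref{thm:VfollowVREFandI} gives $\hat{v}_k = E_k - R_{vk}\hat{i}_k$, where the reference magnitude $E_k$ is \emph{real} because $v_{refk}(t) = E_k\sin\omega_0 t$ is in phase with the common GPS clock. Adding the line-resistance drop $\hat{v}_k - \hat{v}_{PCC} = r_k\hat{i}_k$ yields $\hat{i}_k = (E_k - \hat{v}_{PCC})/(r_k + R_{vk})$, and KCL at the PCC, $\sum_{k}\hat{i}_k = \hat{v}_{PCC}/Z_L$, fixes $\hat{v}_{PCC} = \big(\sum_m E_m/(r_m+R_{vm})\big)\big/\big(\sum_m 1/(r_m+R_{vm}) + 1/Z_L\big)$. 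This is exactly the complex number identified as $\alpha$ in the proof of Theorem~\ref{thm:noQflowRload}, with $\beta_k = E_k$ (real); indeed, since the denominator $r_k+R_{vk}$ is real and positive, $\tan\phi_k = \angle(E_k-\hat{v}_{PCC})$ reproduces the stated $\tan\phi_k = \sin\angle\alpha/(\cos\angle\alpha - \beta_k/|\alpha|)$.

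Next I would substitute $Z_L = R_L$. All of $E_m$, $r_m$, $R_{vm}$, and now $1/Z_L = 1/R_L$ are real, so both the numerator and denominator of $\hat{v}_{PCC}=\alpha$ are real; hence $\alpha\in\mathbb{R}$ and $\angle\alpha = 0$. Inserting $\sin\angle\alpha = 0$ into the theorem's formula gives $\tan\phi_k = 0$, so $\phi_k = 0$, and $\hat{i}_k = (E_k - \alpha)/(r_k+R_{vk})$ is real, i.e. $i_k(t) = I_k\sin\omega_0 t$. Feeding $\phi_k = 0$ into $\tan\psi_k = \sin\phi_k/(\cos\phi_k - E_k/(R_{vk}I_k))$ makes the numerator vanish, so $\psi_k = 0$ and $\hat{v}_k = E_k - R_{vk}\hat{i}_k$ is likewise real, giving $v_k(t) = V_k\sin\omega_0 t$. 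Positivity of $V_k, I_k$ follows by taking magnitudes, choosing the sign of each real phasor so that the amplitude is nonnegative (the degenerate case $R_{vk}I_k = E_k$, where the $\tan\psi_k$ quotient is $0/0$, is handled directly by the real phasor identity $\hat v_k = E_k - R_{vk}\hat i_k$).

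The only real subtlety is notational: because $\alpha$, $\beta_k$, $\gamma_k$ are introduced inside the proof of Theorem~\ref{thm:noQflowRload} rather than in its statement, the main obstacle is confirming that $\alpha$ is precisely the PCC-voltage phasor (so that $\angle\alpha=0$ is the correct reduction) and that $\beta_k = E_k$ is real, which is what forces the denominator $\cos\angle\alpha - \beta_k/|\alpha|$ to be real and the quotient to vanish. Once that identification is secured by the resistive-network phasor argument above, the corollary is immediate and requires no new estimates.
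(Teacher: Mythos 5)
Your proof is correct and takes essentially the same route as the paper's: both reduce the corollary to the observation that $\alpha$ from the proof of Theorem~\ref{thm:noQflowRload} (with $\alpha^{-1}=Z_L^{-1}+\sum_m (r_m+R_{vm})^{-1}$) becomes purely real when $Z_L=R_L$, so $\angle\alpha=0$, giving $\tan\phi_k=0\Rightarrow\phi_k=0\Rightarrow\tan\psi_k=0\Rightarrow\psi_k=0$. One minor imprecision worth noting: $\alpha$ is not \emph{exactly} the phasor $\hat{v}_{PCC}$ but rather $\hat{v}_{PCC}$ divided by the positive real scalar $\sum_m E_m/(r_m+R_{vm})$ (and likewise the paper's $\beta_k$ is $E_k$ divided by that same scalar); since this common scalar is real and positive, $\angle\alpha$ and the ratio $\beta_k/|\alpha|$ are unchanged, so your argument — which you correctly hedged with ``up to a real scaling'' — goes through intact.
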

\begin{proof}
From the expression of $\alpha$ in Theorem \ref{thm:noQflowRload}, it can be clearly stated that if $Z_L=R_L$, then $\alpha$ is a pure real number and thus $\angle\alpha=0$. Therefore,\\
$\tan \phi_k = 0 \implies \phi_k = 0 \implies \tan \psi_k = 0 \implies \psi_k = 0$
\end{proof}
 \begin{remark}\label{rem2.2}
\normalfont With the assumed control architecture and without any explicit control of reactive power, there will be no extraneous reactive power flows if the load is purely resistive. This also holds true when there is an active power mismatch. 
This is validated and elaborated in Section \ref{hardwareresult} and Fig. \ref{hardshare}(a).
\end{remark}

\subsection{Design Guideline for Droop Coefficients}
In the following theorem the phase difference between inverter currents $i_k$ and $i_j$ is characterized in the general setting for complex linear loads.
 \begin{theorem}\label{thm:diffInPhaseOfCurrents}
 Let $\delta_k=\frac{E_k-E^*}{E^*},$ $\bdelta=\left(\begin{array}{lll}\delta_1 & \ldots &\delta_N\end{array}\right)^T,$ $\blam_v:=(\begin{array}{lll}\frac{1}{r_1+Z_v}&\ldots&\frac{1}{r_N+Z_v}\end{array})^T,$ $\bxi=\frac{\blam_v}{\blam_v\bone},$ $\nu=\dfrac{Z_L\blam_v^T\bone}{1+Z_L\blam_v^T\bone}.$ Suppose the conditions of Assumption \ref{assmp:VfollowVREFandI} hold, then the phase $\phi_k$ of $i_k$ differs from the phase $\phi_j$ of $i_j$, given as, $\tan(\phi_k-\phi_j)=$
 \begin{equation}\label{eq:deltaPhase}\displaystyle
 \frac{Img(\nu)\dfrac{\delta_k-\delta_j}{ 1+\bxi^T\bdelta}}{\left(Re(\nu)-\dfrac{1+\delta_k}{(1+\bxi^T\bdelta)}\right)\left(Re(\nu)-\dfrac{1+\delta_j}{(1+\bxi^T\bdelta)}\right)+Img^2(\nu)}.
 \end{equation}
 \end{theorem}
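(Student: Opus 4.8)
The plan is to collapse the $N$-inverter network to its phasor-domain Thevenin equivalent using Theorem~\ref{thm:VfollowVREFandI}, solve a single nodal equation at the PCC, and then read off each line-current phase as the argument of a complex scalar. First I would invoke Theorem~\ref{thm:VfollowVREFandI}: in steady state the $k^{th}$ inverter output obeys $v_k(t)=v_{refk}(t)-R_{vk}i_k(t)$ with $v_{refk}(t)=E_k\sin(\omega_0 t)$. Passing to phasors referenced to $\sin(\omega_0 t)$ — legitimate because the common GPS clock forces every signal to the single frequency $\omega_0$ — this reads $\hv_k = E_k - Z_v\,\hi_k$, where $E_k$ is a \emph{real} phasor of zero phase and $Z_v=R_{vk}$ is the common virtual resistance. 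Adding the line drop $\hv_k-\hat{v}_{\mathrm{pcc}}=r_k\hi_k$ gives the branch relation $\hi_k=\lambda_{v,k}(E_k-\hat{v}_{\mathrm{pcc}})$ with $\lambda_{v,k}=1/(r_k+Z_v)$; that is, each inverter is a real EMF $E_k$ behind the real series resistance $r_k+Z_v$, so $\blam_v$ collects exactly these branch admittances.

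Next I would write Kirchhoff's current law at the single PCC node, $\sum_k \lambda_{v,k}(E_k-\hat{v}_{\mathrm{pcc}})=\hat{v}_{\mathrm{pcc}}/Z_L$, and solve for the PCC voltage as $\hat{v}_{\mathrm{pcc}}=Z_L\,\blam_v^T\bE/(1+Z_L\,\blam_v^T\bone)=\nu\,\bxi^T\bE$, using the definitions of $\nu$ and $\bxi$. Substituting $\bE=E^*(\bone+\bdelta)$ together with $\bxi^T\bone=1$ yields $\hat{v}_{\mathrm{pcc}}=\nu E^*(1+\bxi^T\bdelta)$. Feeding this back into the branch relation and factoring the scalar $\lambda_{v,k}E^*(1+\bxi^T\bdelta)$ out of $\hi_k$ leaves $\hi_k\propto\big((1+\delta_k)/(1+\bxi^T\bdelta)-\nu\big)$. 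Because this prefactor is real and positive — all of $r_k,Z_v,\delta_k,E^*$ are real and the setpoints keep $1+\bxi^T\bdelta>0$ — the phase of $i_k$ is precisely $\phi_k=\arg\big((1+\delta_k)/(1+\bxi^T\bdelta)-\nu\big)$, whose real part is $(1+\delta_k)/(1+\bxi^T\bdelta)-Re(\nu)$ and whose imaginary part is the \emph{$k$-independent} quantity $-Img(\nu)$.

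Finally I would apply the tangent-subtraction identity $\tan(\phi_k-\phi_j)=(\tan\phi_k-\tan\phi_j)/(1+\tan\phi_k\tan\phi_j)$. The numerator collapses because both currents share the same imaginary part $-Img(\nu)$, so only the difference of the real parts survives and produces $(\delta_k-\delta_j)/(1+\bxi^T\bdelta)$; the denominator produces the product of the two shifted real parts plus $Img^2(\nu)$, which after clearing the common factor $\lambda_{v,k}\lambda_{v,j}$ (and matching the sign through the two negated factors $Re(\nu)-(1+\delta_k)/(1+\bxi^T\bdelta)$) gives exactly \eqref{eq:deltaPhase}. The delicate point I would verify most carefully is the reality and positivity of the scalar factored out of $\hi_k$: this is what guarantees that $\arg$ of the bracketed complex number is the true current phase and that no spurious sign contaminates the tangent expression. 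Everything after that identification is routine algebraic simplification of the tangent-difference quotient.
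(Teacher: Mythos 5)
Your proposal is correct: it reproduces \eqref{eq:deltaPhase} exactly, and the sign bookkeeping (the shared imaginary part $-\operatorname{Im}(\nu)$, the two negated real factors whose product is unchanged) checks out. It is not, however, the paper's route. The paper's proof of this theorem is a continuation of its proof of Theorem \ref{thm:noQflowRload}: there the branch currents $\underline{\mathbf{I}}(s)$ are obtained by inverting the admittance matrix $\mathbf{Y}(s)$ via the Woodbury identity, yielding $\hi_k \propto (\beta_k-\alpha)$ in \emph{polar} form, so that $\tan\phi_k = \sin\angle\alpha/(\cos\angle\alpha - \beta_k/|\alpha|)$; the proof of Theorem \ref{thm:diffInPhaseOfCurrents} then applies the tangent-difference identity to that expression and re-parametrizes $\beta_k = E^*(1+\delta_k)/(\blam_v^T(\bone+\bdelta)E^*)$ and $\nu = \alpha\blam_v^T\bone$ to reach \eqref{eq:deltaPhase}. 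You bypass the matrix machinery entirely: Theorem \ref{thm:VfollowVREFandI} turns each inverter into a real EMF $E_k$ behind $r_k+R_{vk}$, one KCL equation at the PCC gives $\hv_{\mathrm{pcc}} = \nu E^*(1+\bxi^T\bdelta)$, and each current is a positive real multiple of $\frac{1+\delta_k}{1+\bxi^T\bdelta}-\nu$, whose Cartesian real/imaginary parts feed the same tangent-difference identity. The two intermediates are equivalent (indeed $\frac{1+\delta_k}{1+\bxi^T\bdelta} = \beta_k\blam_v^T\bone$ and $\nu=\alpha\blam_v^T\bone$), but your derivation is more elementary and more transparent: $\nu$ appears naturally as a voltage-divider gain at the PCC, and the fact that all currents share the $k$-independent imaginary part $-\operatorname{Im}(\nu)$ --- the structural reason the numerator of \eqref{eq:deltaPhase} collapses to a difference of $\delta$'s --- is visible at a glance, whereas in the paper it is buried in the polar-form algebra. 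What the paper's heavier route buys is reuse: the Woodbury-based inversion simultaneously delivers the amplitude results of Theorem \ref{thm:noQflowRload} ($I_k$, $V_k$, $\gamma_k$), which your scalar argument does not provide. Two minor points: your phrase about ``clearing the common factor $\lambda_{v,k}\lambda_{v,j}$'' is slightly off --- the positive prefactor already cancels when forming $\tan\phi_k$ as an Im/Re ratio, and what actually clears in the quotient is the product of the two real parts; and the positivity of $1+\bxi^T\bdelta$ is convenient but not essential, since a common real sign shifts both phases by $\pi$ and leaves $\phi_k-\phi_j$ unaffected.
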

\begin{proof}
See Appendix.
\end{proof}
\begin{remark}
\normalfont For purely resistive load, $\tan (\phi_k-\phi_j)=0$ as $\nu$ becomes real. This also follows from Corollary \ref{cor2.1}.
\end{remark}
\begin{corollary}\label{cor3.1}
Suppose $\delta_k=O(\epsilon)$ and $|Z_L\blam^T\bone|\gg 1$
then,
\begin{equation*}
\tan(\phi_k-\phi_j)\approx \left(\sum_{m=1}^N \dfrac{|Z_L|\sin\theta_L}{r_m+R_{vm}}\right) (\delta_k-\delta_j)
\end{equation*}
\end{corollary}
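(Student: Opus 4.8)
The plan is to reduce the exact identity \eqref{eq:deltaPhase} of Theorem \ref{thm:diffInPhaseOfCurrents} in two stages, one per hypothesis. The key structural observation I would exploit first is that $\nu$ depends only on the load $Z_L$ and the (real, since $r_m+R_{vm}\in{\rm I\!R}_+$) total admittance $\blam_v^T\bone=\sum_{m=1}^N\frac{1}{r_m+R_{vm}}$, and carries no $\bdelta$-dependence whatsoever; every occurrence of $\bdelta$ in \eqref{eq:deltaPhase} sits inside the fractions $\frac{1+\delta_k}{1+\bxi^T\bdelta}$ and $\frac{\delta_k-\delta_j}{1+\bxi^T\bdelta}$. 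This cleanly separates the two approximations: $\delta_k=O(\epsilon)$ controls the fractions, while $|Z_L\blam_v^T\bone|\gg1$ controls $\nu$.

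For the first stage I would linearize \eqref{eq:deltaPhase} in $\bdelta$. The numerator is proportional to $\delta_k-\delta_j$, so it already vanishes at $\bdelta=\bm{0}$; hence to leading order it is enough to replace $1+\bxi^T\bdelta$ by $1$ there, giving numerator $\approx Img(\nu)(\delta_k-\delta_j)$. Because this numerator is $O(\epsilon)$, the denominator need only be retained to zeroth order in $\bdelta$: setting $\bdelta=\bm{0}$ collapses each factor $Re(\nu)-\frac{1+\delta_k}{1+\bxi^T\bdelta}$ to $Re(\nu)-1$, so the denominator becomes $(Re(\nu)-1)^2+Img^2(\nu)=|\nu-1|^2$. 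This reduces the whole expression to $\tan(\phi_k-\phi_j)\approx\frac{Img(\nu)}{|\nu-1|^2}(\delta_k-\delta_j)$.

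For the second stage I would evaluate the coefficient $\frac{Img(\nu)}{|\nu-1|^2}$ directly. Writing $w:=Z_L\blam_v^T\bone$ and $Z_L=|Z_L|e^{j\theta_L}$, one has $\nu-1=-\frac{1}{1+w}$, so $|\nu-1|^2=\frac{1}{|1+w|^2}$, while $Img(\nu)=Img\!\left(\frac{w}{1+w}\right)=\frac{Img(w)}{|1+w|^2}=\frac{|Z_L|\sin\theta_L\,\blam_v^T\bone}{|1+w|^2}$. The factors $|1+w|^2$ cancel, leaving $\frac{Img(\nu)}{|\nu-1|^2}=|Z_L|\sin\theta_L\,\blam_v^T\bone=\sum_{m=1}^N\frac{|Z_L|\sin\theta_L}{r_m+R_{vm}}$, which is exactly the claimed coefficient. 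Notably, in this route the cancellation is exact and $|Z_L\blam_v^T\bone|\gg1$ is not strictly required; I expect the authors instead use it as a shortcut to first write $\nu\approx1-\frac{1}{w}$, so that $Re(\nu)\approx1-\frac{\cos\theta_L}{|w|}$ and $Img(\nu)\approx\frac{\sin\theta_L}{|w|}$, whereupon $(Re(\nu)-1)^2+Img^2(\nu)\approx\frac{\cos^2\theta_L+\sin^2\theta_L}{|w|^2}=\frac{1}{|w|^2}$ by the Pythagorean identity and the same ratio emerges.

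The step I expect to need the most care — rather than the most computation — is certifying that the neglected terms really are higher order, since $|\nu-1|^2=\frac{1}{|1+w|^2}$ is itself small (order $1/|w|^2$) when $|w|\gg1$, so the $O(\epsilon)$ corrections to the denominator must be weighed against this small baseline rather than against unity. Tracking the next-order contribution (the cross term $(Re(\nu)-1)(b_k+b_j)$ with $b_k=\delta_k-\bxi^T\bdelta$) shows the relative error scales like $\epsilon\,|w|$, so the approximation is trustworthy precisely when $\epsilon\,|Z_L\blam_v^T\bone|\ll1$; this is where the two hypotheses $\delta_k=O(\epsilon)$ and $|Z_L\blam_v^T\bone|\gg1$ must be balanced, and I would state this validity condition explicitly. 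I would also flag the minor notational point that the symbol $Z_v$ in the definition of $\blam_v$ is to be read as the per-inverter virtual resistance $R_{vm}$, consistent with $Z_k(j\omega_0)=R_{vk}$ established in Theorem \ref{thm:VfollowVREFandI}.
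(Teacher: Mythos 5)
Your proposal is correct, and its first stage coincides exactly with the paper's own proof: the paper likewise linearizes \eqref{eq:deltaPhase} by setting $(1+\delta_l)/(1+\bxi^T\bdelta)\approx 1$, reducing the denominator to $(\operatorname{Re}(\nu)-1)^2+\operatorname{Im}^2(\nu)$. The two arguments part ways only in how the resulting coefficient is evaluated, and here the paper does precisely what you predicted in your closing paragraph: it invokes the asymptotic relations recorded at the end of the proof of Theorem \ref{thm:diffInPhaseOfCurrents}, namely $\operatorname{Im}(\nu)\approx-\operatorname{Im}\bigl(1/(Z_L\blam_v^T\bone)\bigr)$ and $1-\operatorname{Re}(\nu)\approx\operatorname{Re}\bigl(1/(Z_L\blam_v^T\bone)\bigr)$, so the denominator collapses to $1/|Z_L\blam_v^T\bone|^2$ and the coefficient appears as $-\blam_v^T\bone|Z_L|^2\operatorname{Im}(1/Z_L)=\blam_v^T\bone|Z_L|\sin\theta_L$; this is the only place the hypothesis $|Z_L\blam_v^T\bone|\gg1$ enters. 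Your exact evaluation $\operatorname{Im}(\nu)/|\nu-1|^2=\operatorname{Im}(Z_L\blam_v^T\bone)$ is a genuine sharpening of that step: it shows the coefficient identity is exact, so the large-$|Z_L\blam_v^T\bone|$ hypothesis plays no role in fixing the constant, only the $\bdelta$-linearization is approximate. Your error analysis is likewise a real addition the paper omits: because the baseline denominator $|\nu-1|^2=1/|1+Z_L\blam_v^T\bone|^2$ is itself $O(1/|w|^2)$ with $w=Z_L\blam_v^T\bone$, the discarded cross terms of size $O(\epsilon/|w|)$ are relatively small only when $\epsilon\,|Z_L\blam_v^T\bone|\ll1$ — a validity condition the paper never states but implicitly operates under, since it is exactly the regime of the remark following the corollary, where the phase differences are required to be small. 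In short: same decomposition, but your route buys an exact coefficient and an explicit domain of validity, while the paper's shortcut buys brevity at the cost of leaving both points implicit.
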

\begin{proof}
Assuming that $\delta_k$ are small so that:
${1+\delta_k}/{(1+\bxi^T\bdelta)}\approx 1 $, it follows that  
\[\begin{array}{lll}
\tan(\phi_k-\phi_j)&\approx&-\blam_v^T\bone|Z_L|^2\operatorname{Im}({1}/{ Z_L})(\delta_k-\delta_j)\\
&=&(\delta_k-\delta_j)\displaystyle\sum_{m=1}^N {|Z_L|\sin\theta_L}/({r_m+R_{vm}}),
\end{array}\]
where $\cos \theta_L$ is the power factor of the load $Z_L$.
\end{proof}
\begin{remark}
\normalfont Corollary \ref{cor3.1} extends the relationship in (\ref{eq:deltaPhase}) to show that to ensure  the phase difference in the inverter currents to be small, the values $E_k$ have to be close to $E^*.$ Indeed, for the phase difference of the currents to be smaller than $\epsilon\ll 1$, it follows that  $|\delta_k-\delta_j|\leq \epsilon \frac{\sum_m \frac{1}{r_m+Z_v}}{|Z_L|\sin \theta_L}$ needs to be satisfied, where $\cos \theta_L$ is the p.f. of the load, $Z_L$. Moreover, when compared to $\vert Z_L \vert$ the branch resistance, $r_k$, and virtual resistance, $R_v$, are both designed small. Thus, the assumptions in Corollary \ref{cor3.1} are not restrictive.
\end{remark}
\begin{remark}
\normalfont The expression for $\delta_k$ depends only on $E_k$, whereas all other parameters are load and network parameters. Moreover, the deviation of $E_k$ from $E^*$ is dictated by the outer droop law. Thus, this analysis aids the appropriate choice of $n_k$ and the allowable mismatch from the reference active power $P_k^*$ that is to be delivered, E.g., selecting droop gains $n_i, n_j$ for any two inverters $i, j$, such that $n_i(P_i^*-P_i) = n_j(P_j^*-P_j)$ will keep reactive power flows in check.
\end{remark}

\subsection{Stability Analysis}
One of the important aspects to be investigated for the multi-inverter system operation is its stability. The non-linear multi-inverter closed-loop system can be represented as,
\begin{align}\label{eq:nonLinear}
    \dot{X}(t) = f (X(t),u(t))
\end{align}
where $X(t)$ denotes states of the system including both physical variables and internal control variables as shown in Table \ref{tb:kthVSI} and the input to the system, $u(t)$ is $v_{PCC}(t)$. The objective is to develop a small signal model of the multi-inverter system by linearizing the system (\ref{eq:nonLinear}) around a stable operating point.  We consider the $d_{k}-q_k$ axis as the local reference frame for the $k^{th}$ VSI operating at a rotating frequency $\omega_k$.  The individual VSI state equations are derived in terms of their individual local reference frame. The local reference frames are transformed to the global reference (D-Q) frame which is chosen to be the PCC reference frame operating at $\omega_{com}$. This allows for generalizing the VP-D architecture beyond single-phase microgrids. We first consider the state equations of individual VSIs in their local reference frames represented by $\Theta_k := \theta_{com} - \theta_k$, where $\theta_{com}$ represents the synchronous reference frame angle at PCC and $\theta_k$ represents the deviation of the $k^{th}$ VSI from the synchronous reference frame (Fig. \ref{fig:multiinverter} b). The translation between local  $d_{k}-q_k$ frame to global D-Q frame is given by, 
\[T_{\Theta_k} = \begin{bmatrix} \cos\Theta_k  &  -\sin\Theta_k \\
\sin\Theta_k  & \cos\Theta_k \end{bmatrix},\]
We consider, 
\begin{align*}
    v_{kdq} &= [ v_{kd} \;\, v_{kq}]^T, \,  i_{kdq} = [i_{kd} \;\,i_{kq}]^T, i_{Lkdq} = [i_{Lkd} \;\, i_{Lkq} ]^T,
\end{align*}
where, $\begin{bmatrix} x_{kd} \\ x_{kq} \end{bmatrix} = T_{\Theta_k} \begin{bmatrix} x_\alpha \\ x_\beta \end{bmatrix} $. $x_\alpha $ and $x_\beta$ are the $\alpha - \beta$ components of a given state $x$. For the single phase case, $x_\beta =0$. Note that having  access  to  the  common  clock signal  facilitates  generation of $\alpha-\beta$ components  without  added complexities  of  a PLL  implementation. This is given by the following transformation matrix, 
    
\noindent Note that in the case of perfect synchronization with the GPS clock signal, $\theta_k = 0$ for all VSI as a result of which all  inverters  can  be  represented  in  the  reference frame eliminating the need for translating each VSI's states into the  common  reference  frame. Each of these VSI units comprise an outermost droop control, inner multi-loop voltage and current controllers, output LC-filter connected to the PCC through line impedances and are terminated at the common load. Combining these dynamics, the small signal state space model of the $k^{th}$ single inverter system is given as,
\begin{align}
    \Delta \dot{X}_k = A_k\; \Delta X_k +B_k\; T_{\Theta_k} ^{-1}\Delta  V_{PCCDQ}
\end{align}
\noindent where $\Delta X_k = 
    [\Delta{P_k}\;\, \Delta \phi_{kdq} \;\,\Delta\gamma_{kdq}\;\,\Delta i_{Lkdq} \;\, \Delta v_{kdq} \;\, \Delta i_{kdq} ]^T$, 
    $B_k = \begin{bmatrix}0_{9\times2}\\ \frac{-1}{L_k} \mathbf{I}_{2\times2}\end{bmatrix}$ and  $A_k$ is given as ($\Gamma_{kv}=-R_{vk}K_{iv}\omega_{cv}$),\\
\begin{table}[b]
\centering
\caption{States of the multi-inverter system}
\label{tb:kthVSI}
\begin{tabular}{|cc|}
\hline
\textbf{Parameters} & \textbf{States}     \\ \hline 
Droop \& LPF        & $P_k$                 \\
Voltage Controller  & $\phi_{k,dq}$       \\
Current Controller  & $\gamma_{k,dq}$     \\
LC filter           & $i_{Lk,dq},v_{k,dq}$ \\
Line parameters     & $i_{k,dq}$          \\
Load, $Z_L$         & $i_{Load,DQ}$       \\ \hline
\end{tabular}
\end{table}
Combined model of the N inverter system can now be written as \cite{stabilitygreen,stabilitykimball}, 
\begin{align}\label{eq:ss_Ninv}
    \Delta \dot{X} = \mathbf{A} \;\Delta X + \mathbf{B}\; \Delta V_{PCC,DQ},
\end{align}
where
    $\Delta X =  [\Delta X_1 \,\; \Delta X_2 \,\;\dots \,\; \Delta X_N]^T$, 
    $\mathbf{A} = \text{diag}(A_1, \; A_2, \; \dots \; ,A_N)$ and $\mathbf{B} =  \begin{bmatrix} B_1\; T_{\Theta_1}^{-1} & B_2\; T_{\Theta_2} ^{-1} & \dots & B_N\; T_{\Theta_N} ^{-1} \end{bmatrix}^T$.\\
Considering the dynamics of the lumped complex ($Z_L$) load and considering only a resistive load for simplicity, we obtain:\vspace{-0.2cm}
\begin{align}\label{eq:R-load}
 \Delta V_{PCC,DQ} &= \sum_{k=1}^N T_{\Theta_k} \Delta i_{kdq} R_L
\end{align}
Thus, using (\ref{eq:R-load}) in (\ref{eq:ss_Ninv}) we get, $\Delta \dot{X} = \underbrace{(\mathbf{A} \; + R_L\mathbf{B}\; \mathbf{T_{\Theta}})}_{\mathbf{A}_{sys}} \Delta X,$
where, $\mathbf{T_{\Theta}}$ is a matrix that captures the coupling of VSI interconnection at PCC. Stability analysis in the presence of a complex $R-L$ load can be obtained in a similar manner. 
The stability analysis of the system is given by the eigenvalues of $\mathbf{A}_{sys}$. For a two inverter setup the eigenvalues of the system are shown in Fig. \ref{fig:sensitivity_Analysis} setup configured with the parameters of Table \ref{tb:param_power}. The results are obtained for the case of an asymmetric power sharing ratio of 2.2. The line paramerters are chosen to be $\text{Line}_1 = 0.2~ \Omega + j\omega_o~0.1~mH$  and $\text{Line}_2 =  0.2~ \Omega + j\omega_o~0.15~mH $.
\begin{figure}
    \centering
    \includegraphics[scale=0.17]{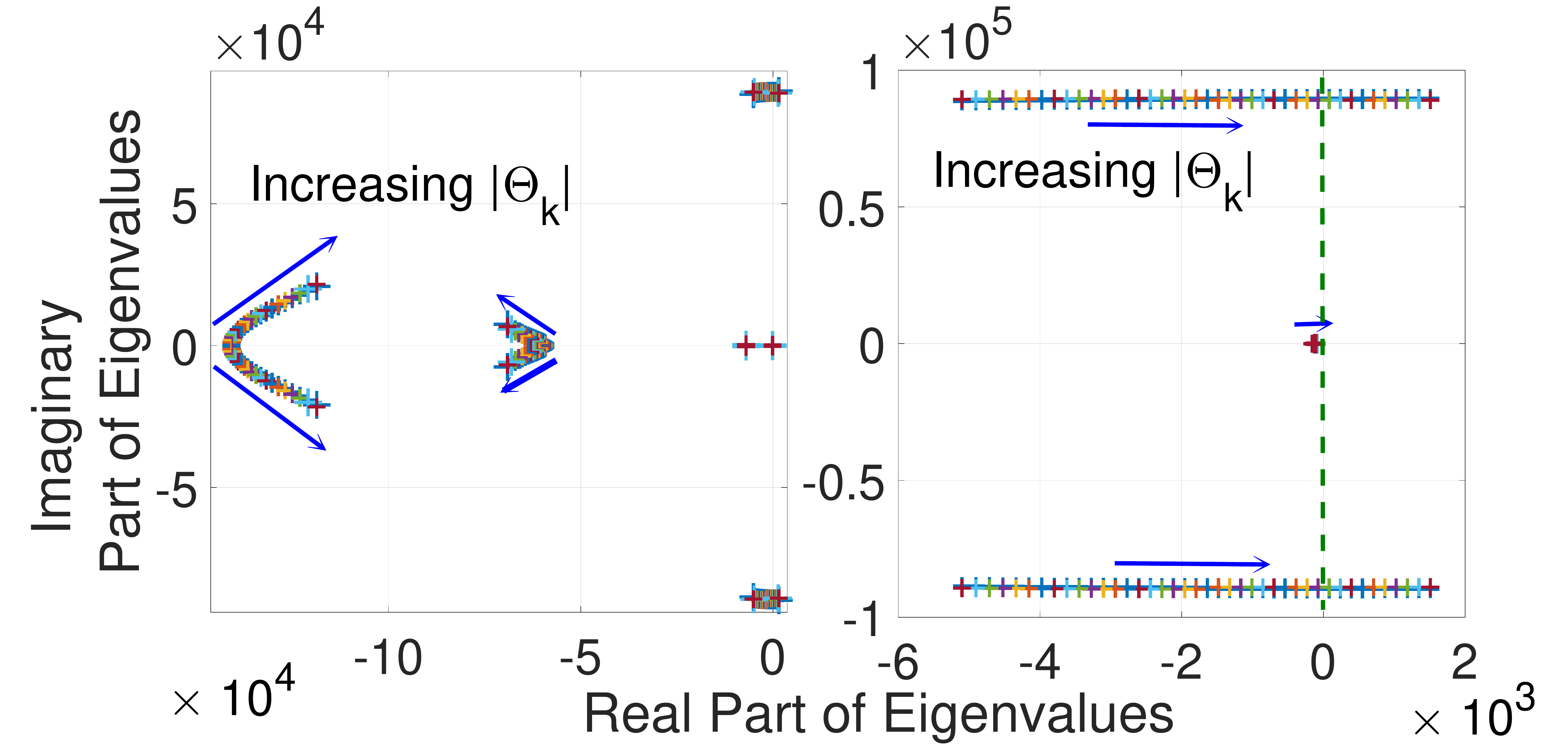}
    \caption{Eigenvalues of the two inverter network for stability analysis  with $\delta_k$ changing from $0^{\circ}$ to $\pm 10^{\circ}$ for a resistive load.}
    \label{fig:sensitivity_Analysis}
\end{figure}
\begin{remark}
    The small signal stability analysis reveals that the stability of the system is maintained in the VP-D architecture when deviation, $\theta_k$, for all individual VSI units with respect to the common reference frame are $\le \pm 5^{\circ}$ . The presence of larger inductive loads (low power factor) results in driving the eigenvalues of the linearized system towards the origin.
\end{remark}

\vspace{-1cm}
\section{Results}\label{results}
\begin{figure}[t]
\centering
{\includegraphics[scale=0.16,trim={0cm 0cm 0cm 0cm},clip]{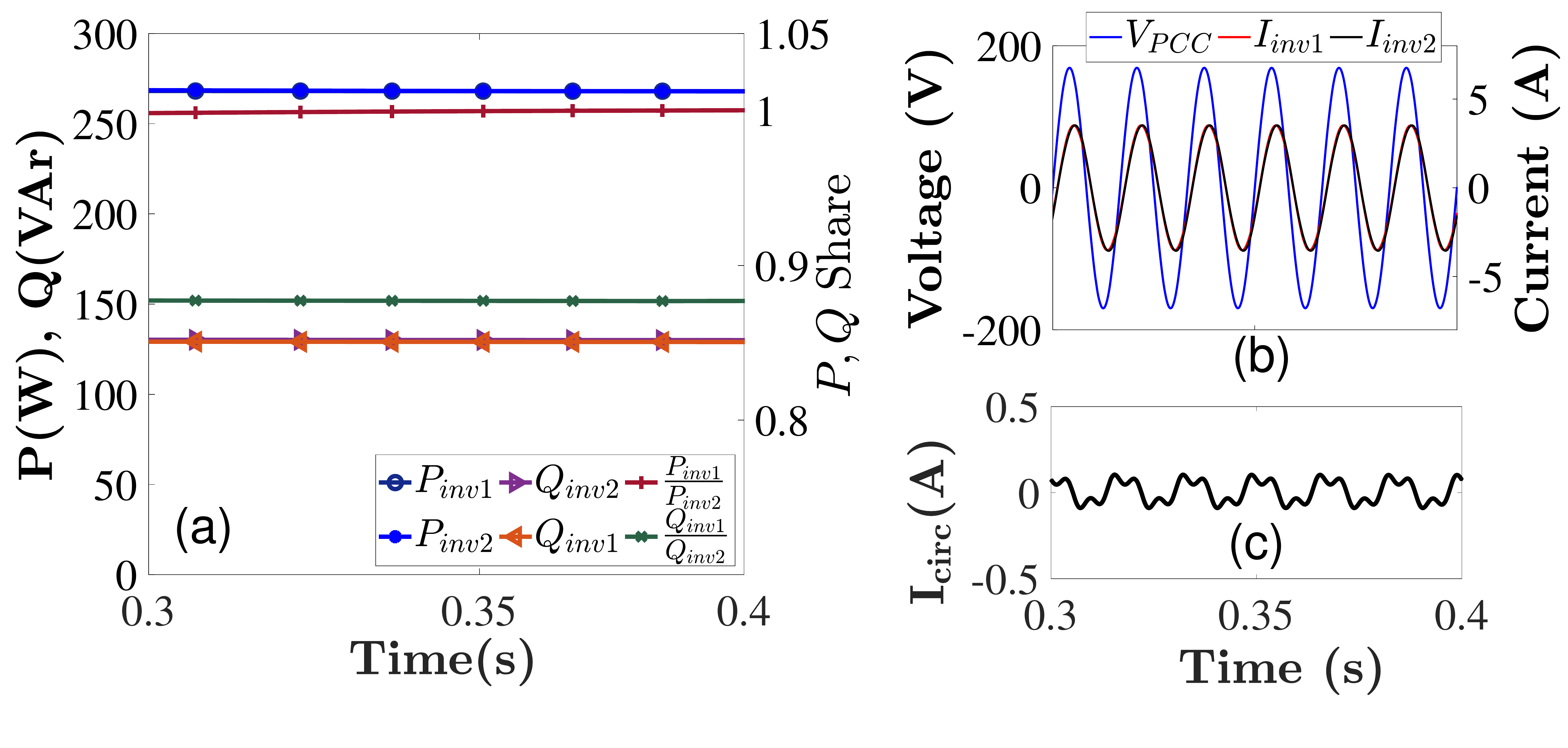}}
\caption{Simulation results of (a) $P$-$Q$ share, (b) voltage at the PCC and inverter output currents, and (c) circulating current between two parallel inverters.}\label{fig:vi1i2icirc}
\end{figure}

\begin{table}[b]
\centering
\caption{Parameters for Experimental Prototype}
\centering
\label{tb:param_power}
\begin{tabular}{|c|c|}
\hline
\textbf{Parameters}                      & \textbf{Values}                             \\ \hline
Cut off frequency ($\omega_p$) & $1 ~(Hz)$
\\ \hline
Nominal Voltage at PCC           & 120 \textit{V}                                        \\ \hline
Nominal Frequency at PCC ($\omega_0$)    & 60  \textit{Hz}                                        \\ \hline
Branch Resistance            & 0.2 $\Omega$                                        \\ \hline
Inverter DC Link Voltage              & 250  \textit{V}                                       \\ \hline
Inverter Filter Inductance           & 0.063  \textit{mH}                                        \\ \hline
Inverter Filter Capacitance     & 1  $\mu F$                                          \\ \hline
Rated Power of $\text{Inverter}_1$ & 0.6   kVA                                      \\ \hline
Rated Power of $\text{Inverter}_2$  & 0.6    kVA                                     \\ \hline
\end{tabular}
\end{table}
\subsection{Simulation Results}
To validate the viability of VP-D-based control of a multi-inverter system, a MATLAB/SIMULINK-based switching model of two inverters sharing a common load at PCC in a LV resistive network is simulated. The network and inverter parameters used in the simulation are listed in Table \ref{tb:param_power}.
Fig. \ref{fig:vi1i2icirc} shows the voltage and current output results from a simulation where two parallel inverters are serving a common load of 0.58 kVA with $p.f. = 0.9$ (lagging). Both inverters are operated with $n_1\mathrm{P_{inv_1}^*}=n_2\mathrm{P_{inv_2}^*=2.5}$ and $\mathrm{P_{inv_1}^*=P_{inv_2}^*=250 W}$. Fig. \ref{fig:vi1i2icirc}(a) shows that the $P$ share between two inverters, i.e., $\mathrm{P_{inv_1}}/\mathrm{P_{inv_2}}$, is maintained at 1.07. Note here that without having a separate $Q$ control strategy, the reactive power sharing between the inverters is nearly even. Here, $\mathrm{Q_{inv_1}}/\mathrm{Q_{inv_2}}$ is 0.88. The mismatch between actual measured and reference output active power for both the inverters, primarily because of finite line losses, results in a small voltage deviation from $E^*$ (around -0.5$\%$) at the PCC. The voltage waveform at the PCC and the output current are shown in Fig. \ref{fig:vi1i2icirc}(b). Fig. \ref{fig:vi1i2icirc}(c) shows the circulating current between the two inverters, which is significantly low compared to the rated output current.
\subsubsection{Mismatched line parameters}\par Simulations for a number of scenarios are considered each for symmetric as well as asymmetric active power sharing between the two inverters with mismatched reference and injection values of active power for the two inverters. Total reference power commanded from each inverter is $P^*_{inv1}=P^*_{inv2} = 0.6 $ kW whereas the actual load, $Z_L$, is a series $R$-$L$ load with $R=11.52 \ \Omega$ and $L = 0.02293$ H. $R_v = 0.2 \ \Omega$ for both inverters. The obtained reactive power share is  $0.9 \le \frac{Q_{inv1}}{Q_{inv2}} \le 1.03$, when the line resistance of inverter 2,  satisfies $ 0.6~  r_{k1} \le r_{k2} \le r_{k1}$. For the cases when the line resistances are highly mismatched ( i.e., $ r_{k2} = 0.5~ r_{k1}$), $\frac{Q_{inv1}}{Q_{inv2}}$, is 0.868 with a worst-case sharing ratio of 0.735 obtained when $ r_{k2} = 0.1~ r_{k1}$. This result demonstrates the performance of the proposed architecture in worse-case conditions toward mismatches in active power references and line parameter.


\subsubsection{Plug-and-play capability}
A three VSI network is considered. At $t=0$ s VSI-1 is on and synchronized with the local clock signal serving a load, $Z_L = 3 \text{kW } (0.97 \text{ pf lagging})$. At $t=0.1$ s and $t=0.25$ s, VSI-2 and VSI-3 set individual clock receive flags high respectively to receive synchronized clock pulses for reference generation. The output voltages at the capacitor of VSI-2 and VSI-3 are thus generated at these instants. In order to demonstrate the plug and play capability, VSI-2 is connected to the common load and VSI-1 at $t=0.2$ s and VSI-3 is then connected to the rest of the network at $t=0.3$ s. Symmetric power sharing is desired at all VSIs. Output voltages, line currents and circulating currents are shown in Fig. \ref{fig:VIplugnplay}. Fig. \ref{fig:PQplugnplay} shows the symmetric sharing of the load, $Z_L$, with active power sharing ratio of 1 as well as reactive power sharing ratio of $> 0.95$ at steady state even when no $Q$ control is implemented. Finally, at $t=0.4$s, a load transition to 4.5 kW (0.97 p.f.lagging) occurs. 
\begin{figure}[t]
\centering
\includegraphics[scale=0.22,trim={0cm 1.5cm 0cm 1cm},clip]{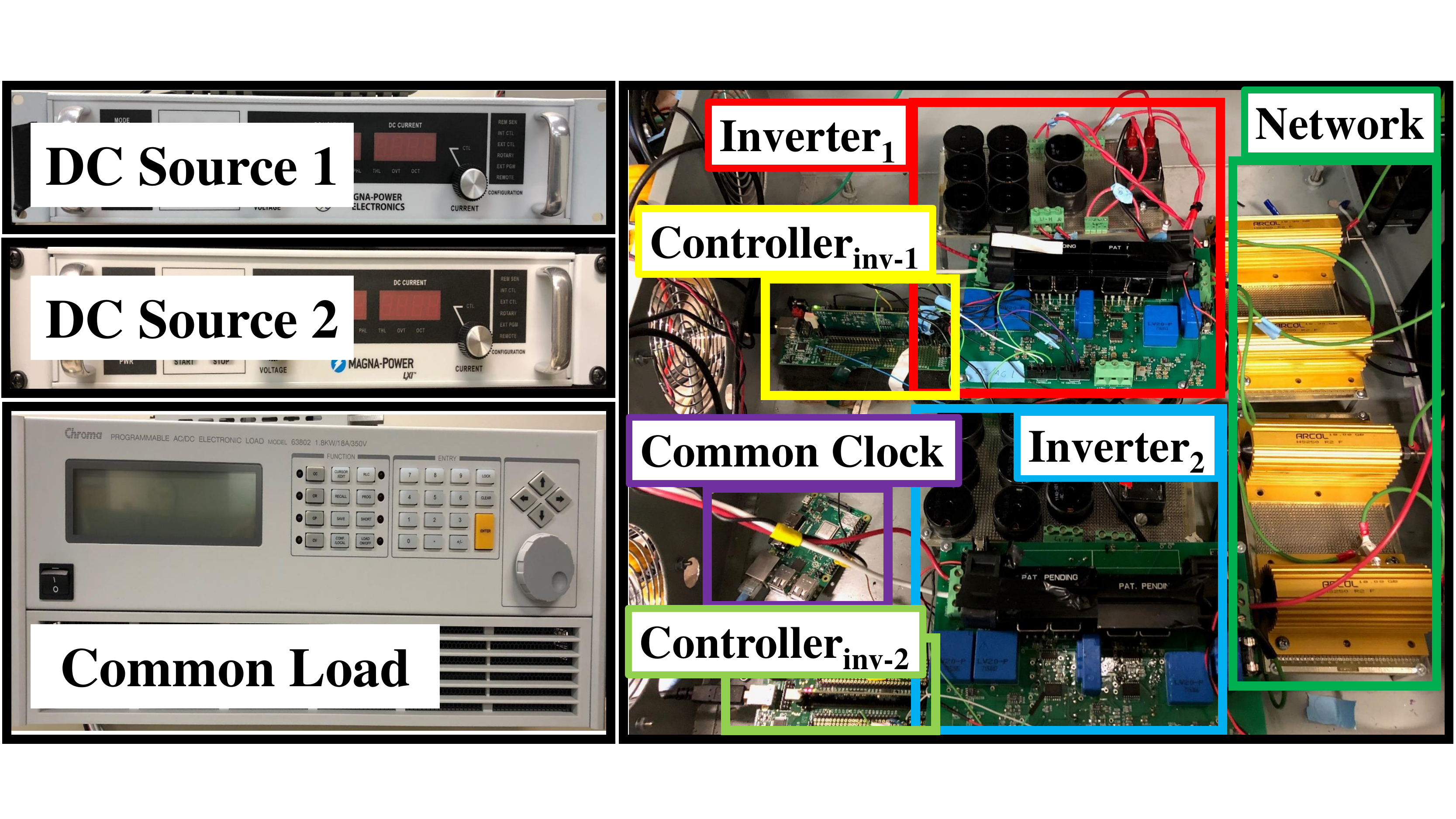}
\caption{Experimental prototype for 1.2 kVA two inverter setup with common clock and network resistances.}
\label{fig:setup}
\end{figure}
\subsubsection{Comparison with full droop ($P-V/Q-f$)}
The three VSI network as above is considered while implementing a full droop architecture with $P-V$ and $Q-f$ droop laws.  At $t=0$ s VSI-1 is on and serving a load, $Z_L = 3 \text{kW } (0.97 \text{ pf lagging})$. At $t=0.1$ s and $t=0.3$ s, VSI-2 and VSI-3 are synchronized and connected to the network. The presence of an integrator in the $Q-f$ droop implementation results in initial transients in output voltage and current of the VSIs that cause in higher transient and steady state circulating currents. This is apparent at $t=0.1$ s and $t=0.3$ s when the VSIs are turned on (see Fig. \ref{fig:fulldroopcomprVI}). At $t=0.4$s, a load transition to 4.5 kW (0.97 p.f.lagging) occurs. The full droop implementation also suffers from longer transient times to reach steady state in both active and reactive power sharing as well as poor reactive power sharing ratio (sharing ratio of 1 is desired) during load changes and steady state values due to mismatch in commanded and actual reactive power (see Fig. \ref{fig:fulldroopcomprPQ}). Meanwhile, the isochronous architecture provides improved transient and steady state performance during black starts, plug-and-play of VSIs and load changes even in mismatched conditions. High power sharing accuracy during plug-and-play and fast transient response demonstrates the superiority of the proposed architecture over full droop methods in MMGs.
\begin{figure}[t]
    \centering
    \includegraphics[scale=0.17]{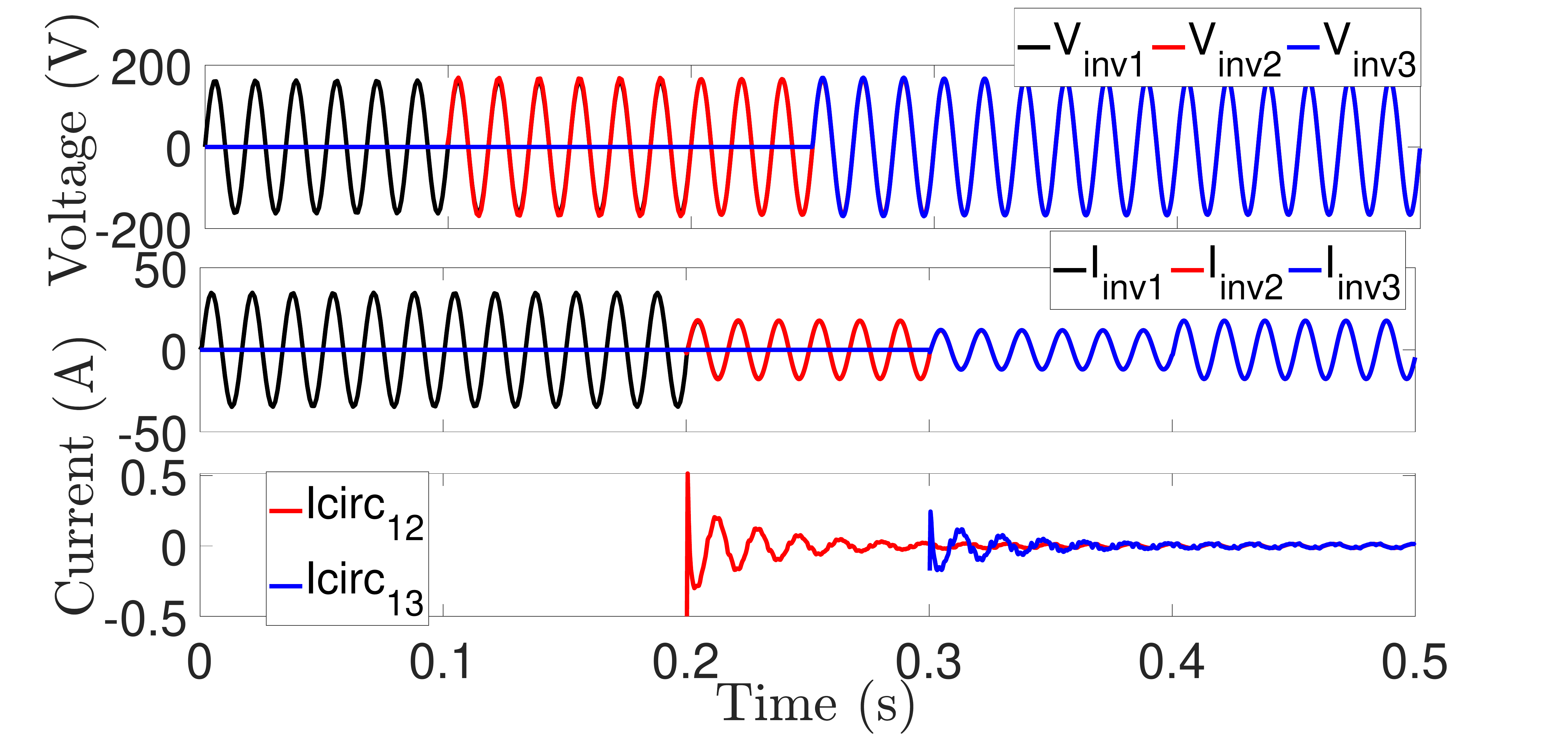}
    \caption{Output voltages, line currents and circulating current for a three inverter system with VP-D architecture for demonstrating plug and play.}
    \label{fig:VIplugnplay}
\end{figure}
\begin{figure}[t]
    \centering
    \includegraphics[scale=0.18]{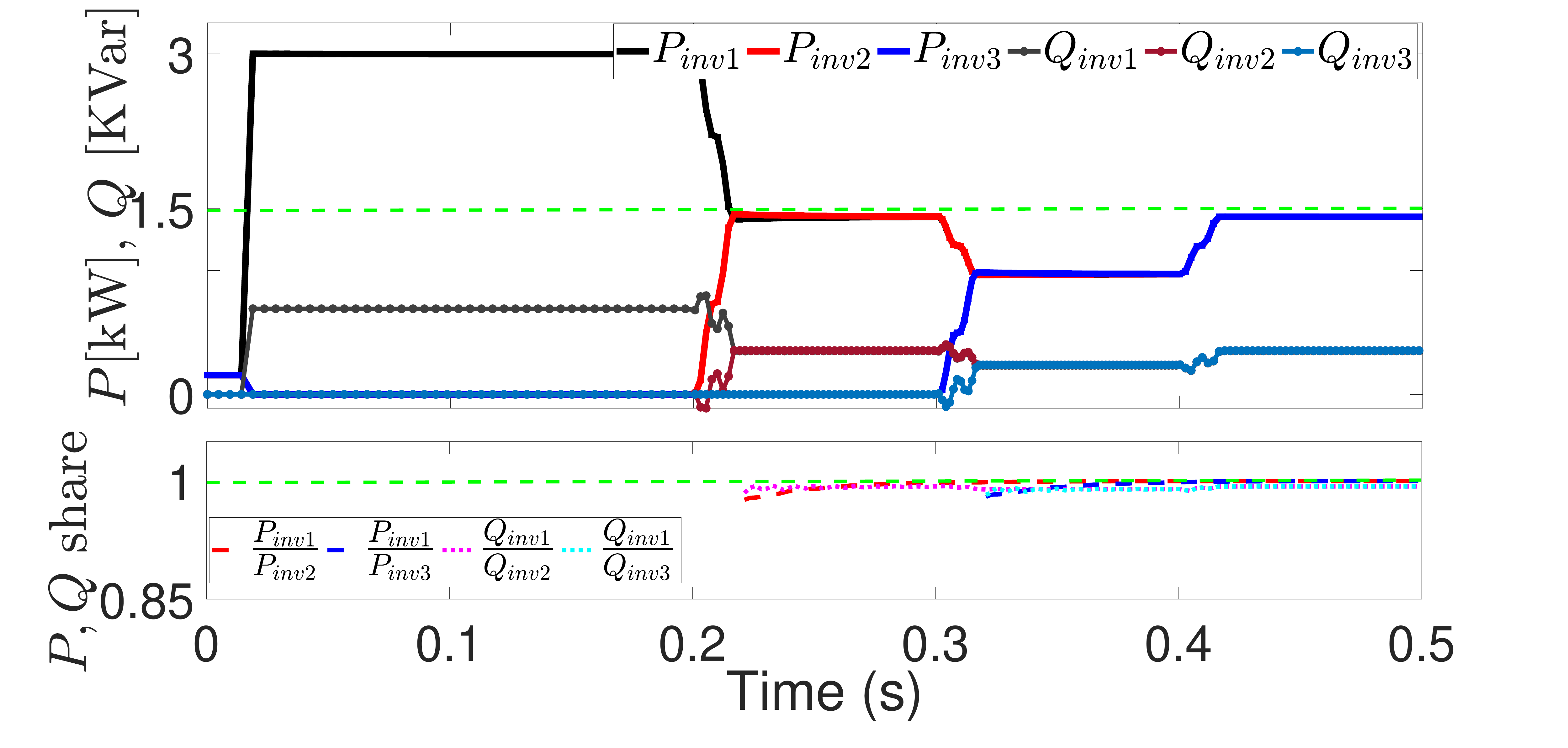}
    \caption{Active, reactive power sharing performance for the three inverter system implementing VP-D for plug-and-play.}
    \label{fig:PQplugnplay}
\end{figure}
\begin{figure}[t]
    \centering
    \includegraphics[scale=0.17]{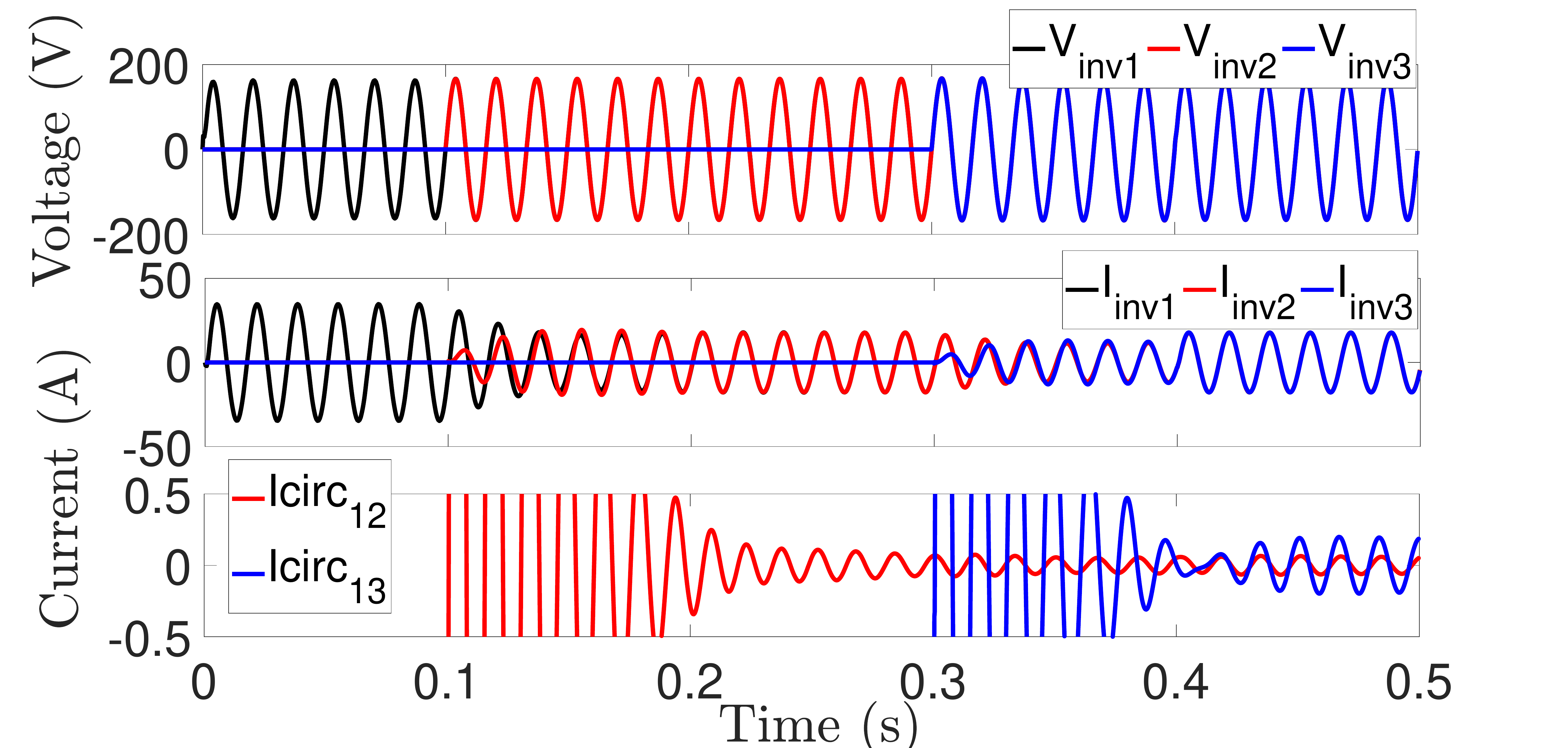}
    \caption{Output voltages, line currents and circulating current for a three inverter system implementing $P\sim V, Q \sim f$ full droop for plug-and-play.}
    \label{fig:fulldroopcomprVI}
\end{figure}
\begin{figure}[t]
    \centering
    \includegraphics[scale=0.18]{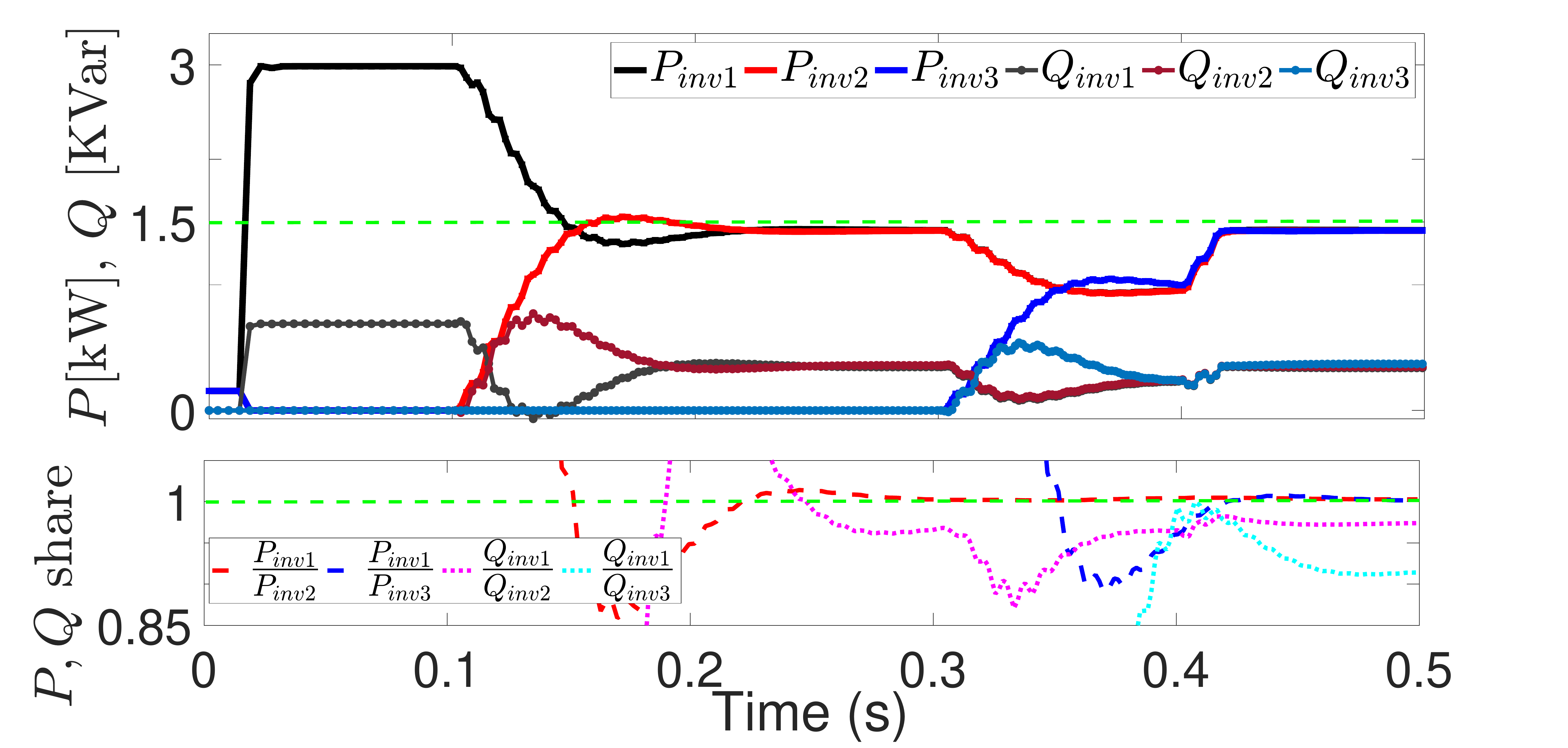}
    \caption{Active, reactive power sharing performance for the three inverter system with full droop for plug-and-play.}
    \label{fig:fulldroopcomprPQ}
\end{figure}
\subsection{Experimental Validation}\label{hardwareresult}
Experiments are carried out in a laboratory setup as shown in Fig. \ref{fig:setup}. The experimental configuration consists of two single-phase insulated-gate bipolar transistor-based H-bridge inverter units along with $LC$ output filter and controller unit for each, a purely resistive network, a variable common load served by both inverter units, and a Raspberry Pi-based (RPi) clock pulse generator. The RPi unit provides clock pulses with precise time-stamping. Each inverter consists of a digital input module to emulate a GPS receiver. All inverters synchronize their local clocks, using zero-crossing detection and pulse counting, to the RPi clock signal transmitted as a square wave. This clock signal is then used to generate sinusoidal references to individual inner-loop controllers. Each inverter unit is rated as 0.6 kVA operating and utilizing a switching frequency of 20 kHz. The active power control including active power calculation and droop control, reference voltage generation, inner voltage and current controller, and PWM generation, digital input module are implemented on a TMS320F28335 digital signal processor. The inverter and network parameters are provided in Table \ref{tb:param_power}. The value of the droop coefficients for both inverters $-$ i.e., $n_1, n_2$ are selected as $2 \times 10^{-4}$ V/W. The loop-shaping controllers $K_{curk}$ and $K_{volk}$ are designed as described in Section \ref{controller_design} with bandwidths 980 Hz and 600 Hz, respectively, for closely tracking sinusoidal references. 



\begin{figure*}[t]
\centering
\subfloat{\includegraphics[width=9.25cm,height = 5cm,trim={1cm 0cm 0cm 0cm},clip]{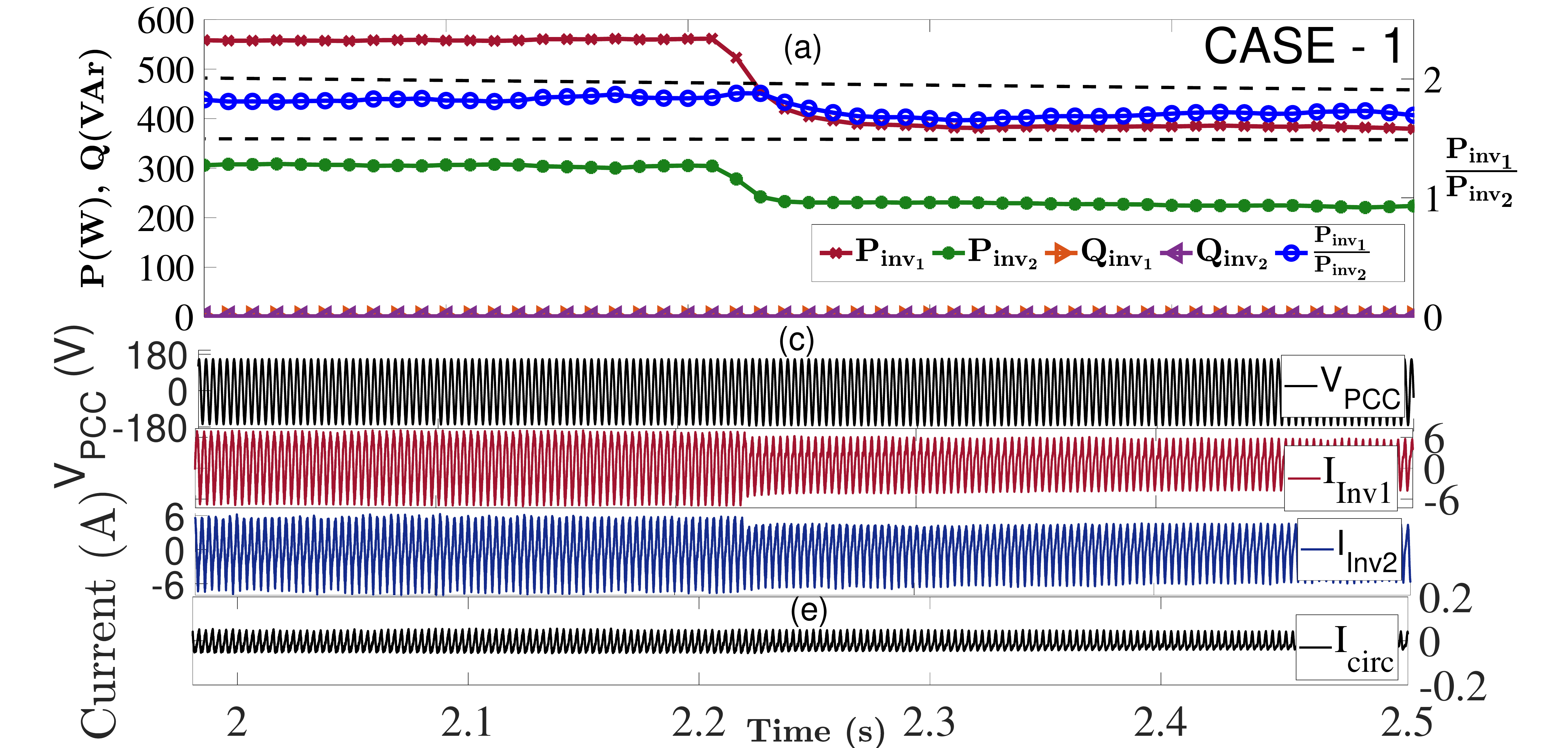}}~
\subfloat{\includegraphics[width=9.25cm,height = 5cm,trim={1cm 0cm 0cm 0cm},clip]{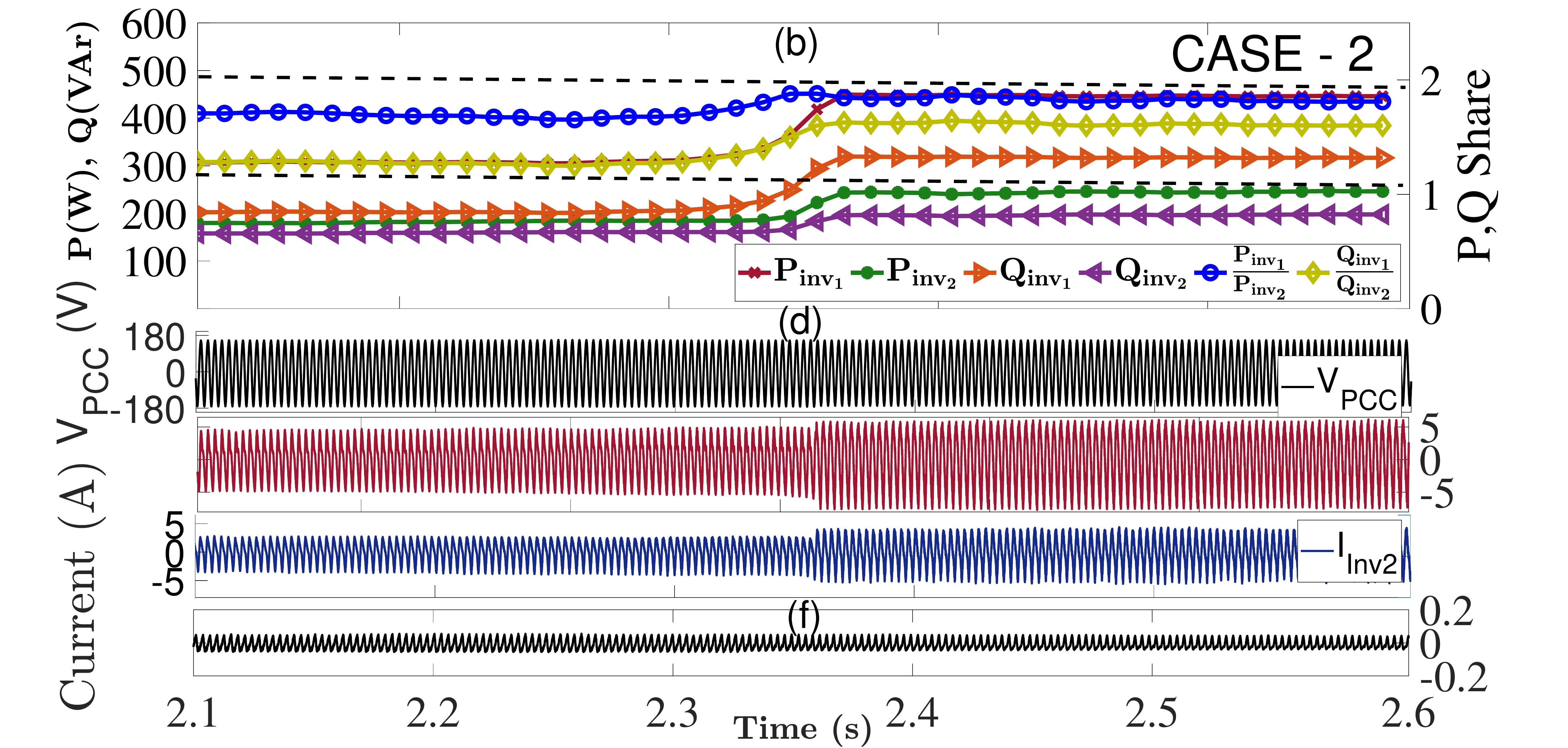}}
\caption{Hardware results of (a) $P$-$Q$ share between two parallel inverters in Case 1, (b) $P$-$Q$ share in Case 2 (c) common grid voltage and output currents in Case 1, (d) common grid voltage and output currents in Case 2 and (e) circulating current in Case 1, (f) circulating current in Case 2.Case 1 has a load of 0.87 kVA with transition at t = 2.22s to 0.61 kVA with $p.f.= 1.0$ and Case 2 has a load of 0.6 kVA with transition at t= 2.34s to 0.84 kVA with $p.f.= 0.8$.}
\label{hardshare}
\end{figure*}

The expressions of $K_{volk}$ and $K_{curk}$ are as follows:
\[\begin{array}{lll}
K_{volk}=\frac{s (s+3139) (s+1172)  (s^2 + 3328s + 8.895\times 10^6)}{1492.75(s+5390) (s+1406) (s^2 + (2\pi60)^2)}\\ \\
K_{curk} = \frac{38403 (s+1406) (s+222)}{(s+6468) (s^2 + (2\pi60)^2)}
\end{array}\]

To verify the proposed VP-D control of the multi-inverter system, two case studies are conducted for experimental validation of the active and reactive power sharing performance (as shown in Fig. \ref{hardshare}) using the hardware prototype. Case 1 considers two inverters serving a common load of $S_{tot}$ = 0.87 kVA with transition at t = 2.22 s to $S_{tot}$ = 0.61 kVA maintaining unity power factor throughout. Meanwhile, Case 2 studies two inverters serving a common load of $S_{tot}$ = 0.6 kVA with transition at t = 2.34 s to $S_{tot}$ = 0.84 kVA maintaining 0.8 power factor throughout.
\par In Case 1, both inverters are assigned with asymmetric active power references of $\mathrm{P_{inv_1}^*}$ = 0.56 kW and $\mathrm{P_{inv_2}^*}$ = 0.31 kW throughout the operation. The active power sharing ratio of $1.8 \pm 10\%$ is maintained throughout the operation, as shown in Fig. \ref{hardshare}(a). Moreover, the output reactive powers of both the inverters are zero, which is a direct validation of Corollary \ref{cor2.1}. The transient time of the inverter, which here is defined as the 90\% settling time of the inverter's output current peak after a load step, is about 0.037 s (i.e., 2.2 cycles of 60 Hz). This demonstrates the ability of the proposed controller to inject real power into the network to provide ancillary support to the microgrid in fast- (primary response) timescale regimes. Fig. \ref{hardshare}(c) shows the voltage and output current waveforms of both inverters before and after the load transition. 
There is a finite active power mismatch ($\Delta \approx +4.44\%$) that causes a deviation ($\approx +2.5\%$) of voltage magnitude at the PCC from its nominal value. These deviations are attributed to measurement error in active power at relatively low power operating conditions. Fig. \ref{hardshare}(e) illustrates the circulating current between two inverters throughout the operation \cite{circ}. Clearly, the magnitude of the circulating current is significantly less than the nominal output current ratings of the inverters. The total harmonic distortion (THD) values of the voltage waveform at the PCC before and after the transition are 1.86\% and 1.47\%. 
These values are also maintained within acceptable standards \cite{8332112}.
Similarly, in Case 2, the inverters are assigned asymmetric active power references of $\mathrm{P_{inv_1}^*}$ = 0.3 kW and $\mathrm{P_{inv_2}^*}$ = 0.18 kW throughout the operation. The sharing ratio of active power is maintained at 1.67 throughout the operation. Moreover, reactive power is shared with a ratio of $1.2 \pm 10\%$ without implementing any dedicated $Q$ control strategies throughout the operation. Fig. \ref{hardshare}(d) illustrates the voltage and output current waveforms of both inverters before and after the load transition. The transient time of inverter current is 0.05 s (i.e., 2.99 cycles of 60 Hz). Here, $\Delta \approx -5.56\%$ that results in a deviation ($\approx -3\%$) in PCC voltage magnitude from its nominal. Fig. \ref{hardshare}(f) illustrates the circulating current between the two inverters. %
\vspace{-0.5cm}
\subsection{Design Considerations of Multi-Inverter System}
This section provides design guidelines to restrict limits on reactive power flows for implementing VP-D control in a multi-inverter system.    
\begin{figure}[t]
\centering
\subfloat[]{\includegraphics[scale=0.12,trim={3cm 0cm 18cm 1cm},clip]{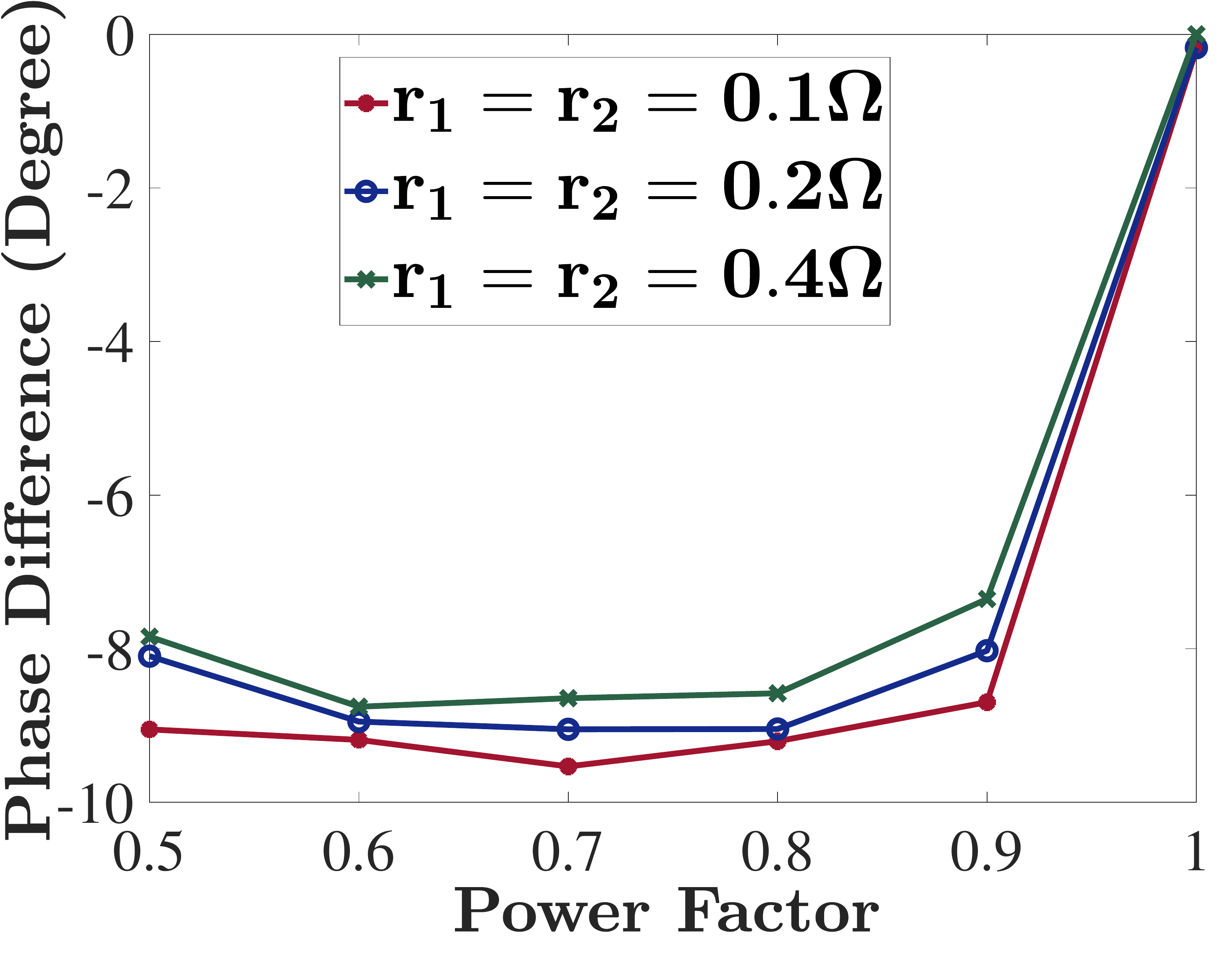}\label{fig:rkvariation}}
~~
\subfloat[]{\includegraphics[scale=0.12,trim={2.8cm 0cm 18cm 1cm},clip]{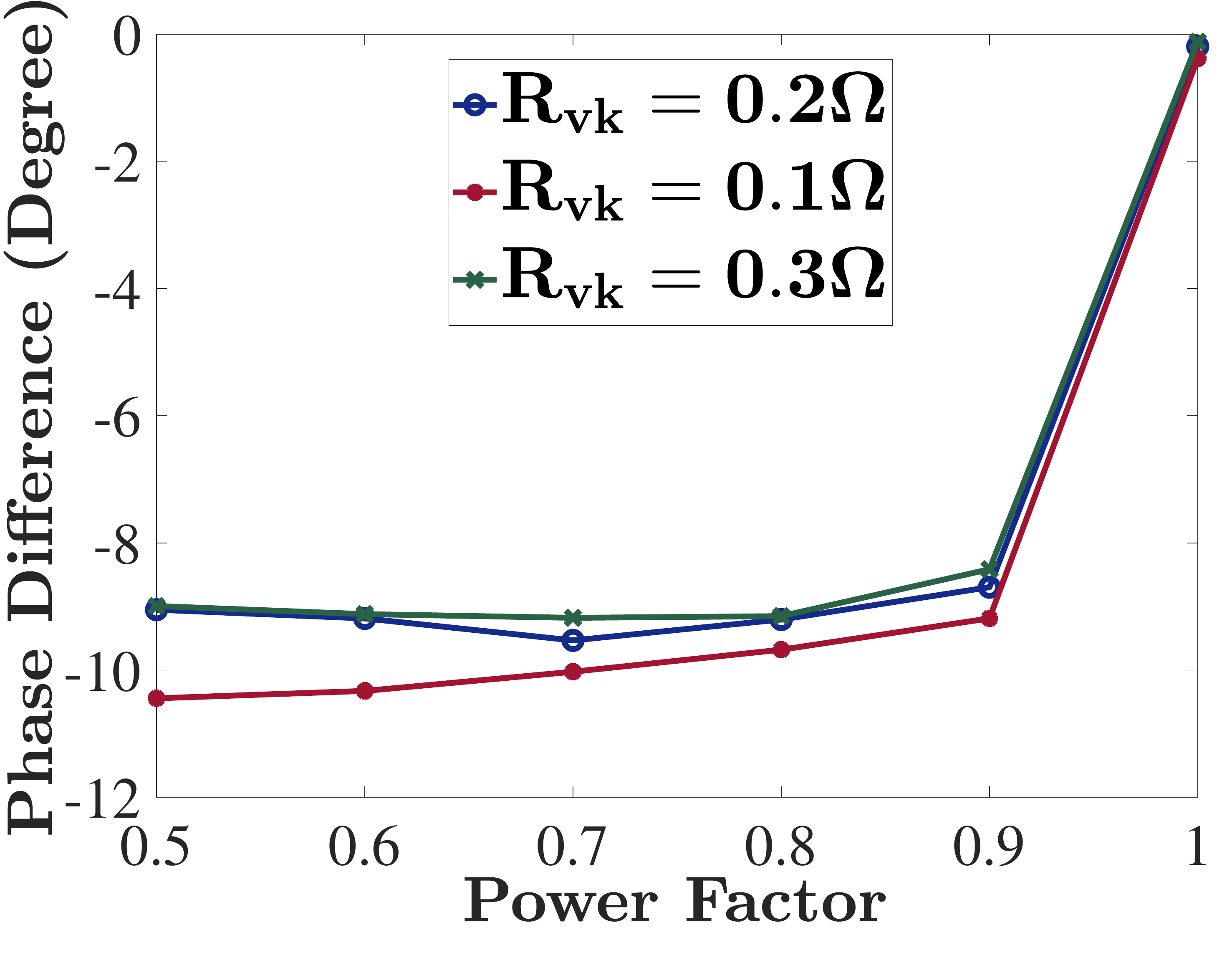}\label{fig:rvvariation}}\\
~~
\subfloat[]{\includegraphics[scale=0.12,trim={3cm 0cm 18cm 1cm},clip]{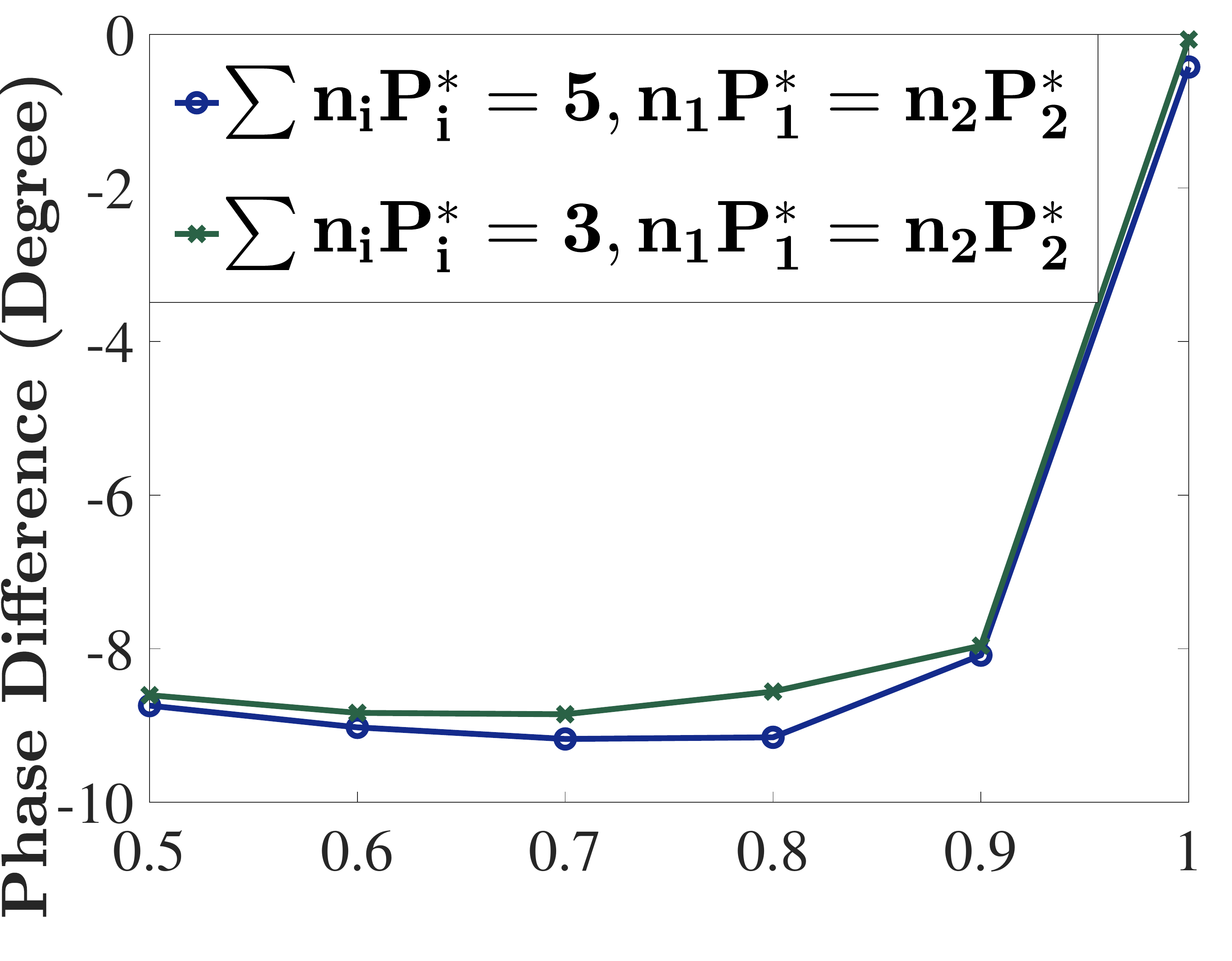}\label{fig:nipivariation}}
~~
\subfloat[]{\includegraphics[scale=0.12,trim={3cm 0cm 18cm 1cm},clip]{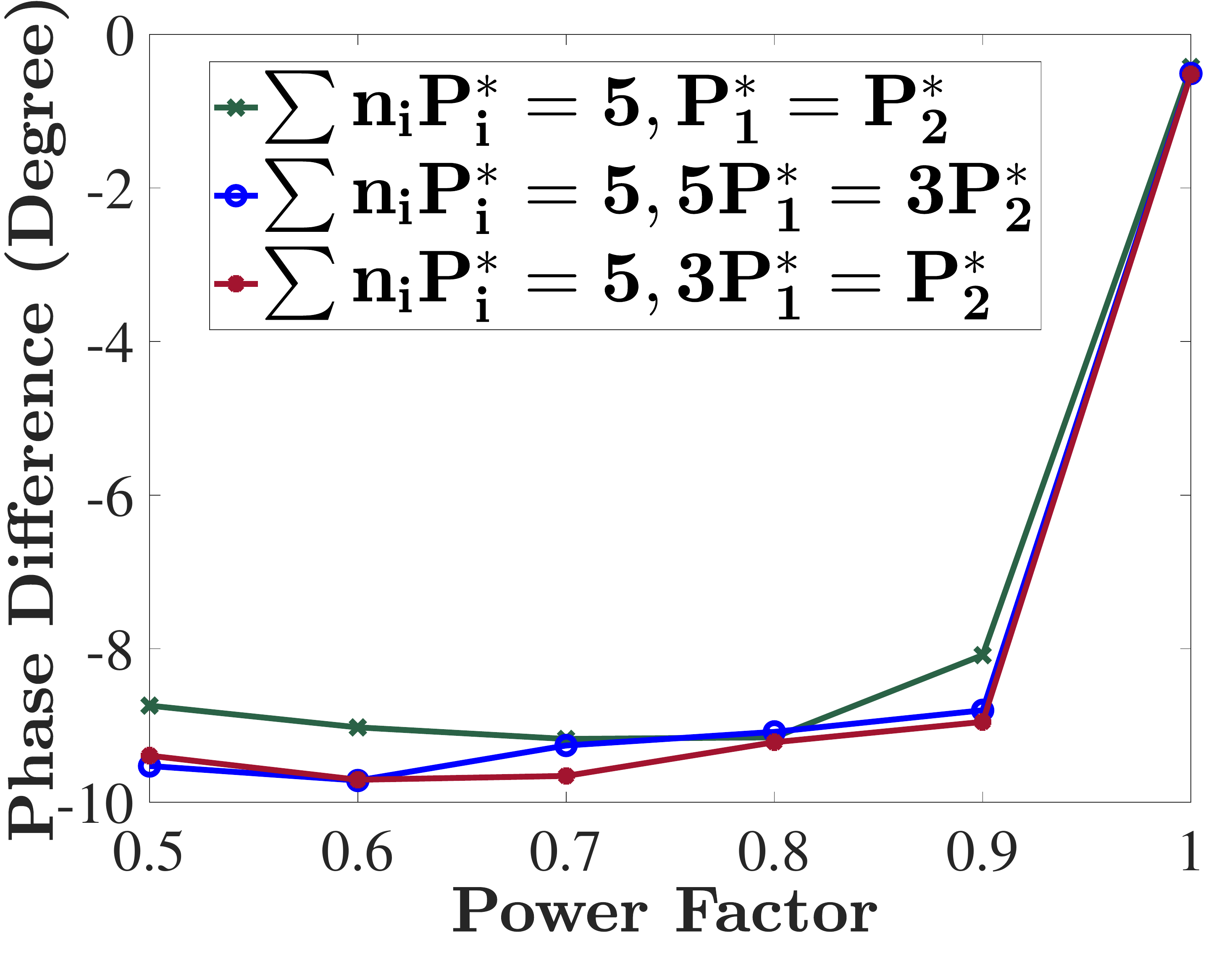}\label{fig:nipiuneqPvariation}}
\caption{Experimental results of phase difference between output currents of two inverters serving common load with varied power factors in the scenario of having various values of (a) $\Sigma_i n_iP_i^*$ with $n_1P_1^* = n_2P_2^*$, (b) $\Sigma_i n_iP_i^*$ with $n_1P_1^* \ne n_2P_2^*$.}\label{fig6}
\end{figure}
Fig. \ref{fig6} shows experimental results for the variation in phase difference (inverter currents) when an electronic load ( Fig. \ref{fig:setup}), $Z_L$ with apparent Power $S = 0.6$ kVA is varied in power factor from $p.f.=0.5$ to unity. It is important to emphasize here that even with such a low power factor of the load to be served and no explicit control over reactive power flows of the system, the phase difference between currents is maintained low ($\approx 10^\circ$ or less) by the proposed controller architecture.\\
Fig. \ref{fig6}(a) and Fig. \ref{fig6}(b) show that higher values of branch resistance and virtual resistance lead to smaller phase differences. This, however, may lead to a decrease in the accuracy of active power sharing in the network.
From Fig. \ref{fig6}(c) and Fig. \ref{fig6}(d) it can be observed  that a choice of smaller \( \sum_{i} n_iP_i^*\) results in a smaller phase difference. These observations serve as design considerations to limit reactive power flows and implement an isochronous VP-D architecture in an LV multi-inverter system.
\section{Conclusion}\label{conclusion}
In this paper, a voltage-active power half-droop was proposed and demonstrated to show high power sharing accuracy during plug-and-play and fast transient response of the proposed architecture over full-droop methods in a resistive LV ac microgrid. A loop shaping-based control law was designed for a single inverter unit to enable close regulation of the reference sinusoidal voltage signal to be tracked. A novel isochronous architecture was also proposed to maintain the system frequency and enable active power sharing in the multi-inverter system. The main advantage of the proposed architecture is to keep the reactive power flows small among the inverters without implementing an explicit $Q-f$ droop law, thereby, reducing the overall complexity of the inverter controllers. Small signal based stability analysis was conducted to show the feasibility of the proposed architecture for various scenarios of the multi-inverter system. Moreover, analytical results were derived to provide design guidelines for selecting droop gains when there are active power mismatches and uncertainties in the load present in the LV microgrid. Experimental results were provided for various scenarios that validate the performance and capabilities of the proposed controller architecture. 
\vspace{-0.5cm}
\section*{Appendix}
\noindent {\bf Proof of Theorem \ref{thm:VfollowVREFandI}:} 
The steady-state tracking error is:
\[\lim_{s\rightarrow 0}s\bigg[\big[V_{refk}(s) - R_{vk}I_k(s)\big] - \big[G_kV_{refk}(s) - Z_kI_k(s)\big]\bigg]\]
\[= \lim_{s\rightarrow 0}s(1-G_k)V_{refk}(s) - \lim_{s\rightarrow 0}s(R_{vk} - Z_k)I_k(s)\]
Now, $G_k$ formulated in \eqref{eq5} can be written as:
\[\begin{array}{lll}
G_k = \frac{G_{pk}K_{volk}}{1+G_{pk}K_{volk}}
\end{array}\]
\[\textrm{where }\begin{array}{lll} G_{pk}=\frac{1}{C_ks}\frac{G_{ck}K_{curk}}{1+G_{ck}K_{curk}}, \hspace*{2mm} G_{ck}=\frac{1}{L_ks+R_k}\end{array}\]
Suppose $K_{volk}$ has  poles at $\pm j\omega_o.$ Hence, $K_{volk}$ can be factored as $K_{volk}=\frac{n_{volk}}{d_{volk}(s^2+\omega_o^2)}.$
Thus, because $G_{pk}$ has a pole at $s = 0$, it is evident that  $(1-G_k)$ will have a numerator with a factor $(s^2+\omega_o^2)s.$ This implies that $G_k(j\omega_o)=1.$
Also, from the final value theorem, it follows that:
\[\lim_{s\rightarrow 0}s (1-G_{k}) \frac{\omega_o}{s^2+\omega_o^2}=0.\]
In a similar manner, we can rewrite the expression of $Z_k$ as:
 \[\begin{array}{lll}
 Z_k=G_kR_{vk}+(1-G_k)(\frac{1}{sC_k})\frac{1}{1+G_{ck}K_{curk}}
 \end{array}\]
Since $G_k(j\omega_o)=1$, it follows that $Z_k(j\omega_o)=R_{vk}.$
Also, from the final value theorem, it follows that:
\[\begin{array}{lll}
\centering
& &\lim_{s\rightarrow 0}s(R_{vk}-Z_k)\frac{\omega_o}{s^2+\omega_o^2} \\
&=&\lim_{s\rightarrow 0}s(1-G_k)R_{vk}\frac{\omega_o}{s^2+\omega_o^2}\\  
& & +\lim_{s\rightarrow 0}s(1-G_k)(\frac{1}{sC_k})\frac{1}{1+G_{ck}K_{curk}}\frac{\omega_o}{s^2+\omega_o^2} = 0
\end{array}\]
As the term $(1-G_k)$ has a factor $s(s^2+\omega_o^2)$ it follows that 
\[\lim_{s\rightarrow 0}s(1-G_k)R_{vk}\frac{\omega_o}{s^2+\omega_o^2}=0\]
\[\lim_{s\rightarrow 0}s(1-G_k)(\frac{1}{sC_k})\frac{1}{1+G_{ck}K_{curk}}\frac{\omega_o}{s^2+\omega_o^2}=0\] 

This implies that in steady state the output of inner-loop controller tracks the input sinusoidal reference $v_{rk}(t) = v_{refk}(t) - R_{vk}i_k(t)$ with zero error.\\
\vspace{-0.78cm}
\subsection{Extension for Multiple Inverter System}
The closed-loop averaged output voltage dynamics of \eqref{eq5} can be extended for $N$ inverters and described by:
\begin{equation}\label{eq6}
\underline{\mathbf{V}}(s) = \mathbf{G}\underline{\mathbf{V}}_{\mathbf{ref}}(s) - \mathbf{Z}\underline{\mathbf{I}}(s)
\end{equation}
where, $\underline{\mathbf{V}}(s) \triangleq [V_1(s) \hspace*{1mm} V_2(s) \ldots V_N(s)]^T$, $ \underline{\mathbf{V}}_{\mathbf{ref}} \triangleq [V_{ref1}(s) \hspace*{1mm} V_{ref2}(s) \ldots V_{refN}(s)]^T$, $\underline{\mathbf{I}}(s) \triangleq [I_1(s) \hspace*{1mm} I_2(s) \ldots I_N(s)]^T$, $\mathbf{G} \triangleq \text{diag}(G_1, G_2, \ldots G_N)$ and  $\mathbf{Z} \triangleq \text{diag}(Z_1, Z_2, \ldots Z_N)$. 
Assuming a linear load, the closed-loop expression of $\underline{\mathbf{V}}(s)$ and $\underline{\mathbf{I}}(s)$ can be formulated by using the admittance matrix $\mathbf{Y}(s)$, as follows:
\begin{subequations}\label{eq8}
\begin{align}
\label{eq8a}
\underline{\mathbf{V}}(s) &= [I-\mathbf{Z}\mathbf{Y}(s)]^{-1}\mathbf{G}\underline{\mathbf{V}}_{\mathbf{ref}}(s) \\ \label{eq8b}
\underline{\mathbf{I}}(s) &= \mathbf{Y}(s)[I-\mathbf{Z}\mathbf{Y}(s)]^{-1}\mathbf{G}\underline{\mathbf{V}}_{\mathbf{ref}}(s)
\end{align}
\end{subequations}
where $\mathbf{Y}(s) := \mathbf{\Lambda} - h(s)\pmb{\lambda} \pmb{\lambda}^T$, $h(s)^{-1} := Z_L(s)^{-1}+\sum_{k=1}^N r_k^{-1}$, 
$\pmb{\lambda}=[r_1^{-1} \hspace*{1mm} r_2^{-1} \ldots r_N^{-1}]^T$, and $\mathbf{\Lambda}=\text{diag}(r_1^{-1}, r_2^{-1}, \ldots r_N^{-1})$. In a similar way, $\underline{\mathbf{V}}_{\mathbf{ref}}$ can be formulated by rewriting in matrix form:
\begin{equation}\label{vref}
\underline{\mathbf{v}}_{\mathbf{ref}} = [E^{*}\mathbf{1} - \mathcal{N}(\underline{\mathbf{P}} - \underline{\mathbf{P}}^{*})]\sin \omega_o t = \mathbf{E}\sin \omega_o t   
\end{equation}
where $\mathcal{N}  \triangleq \text{diag}(n_1, n_2, \ldots n_N)$, $\underline{\mathbf{P}}  \triangleq [P_1 \hspace*{1mm} P_2 \ldots P_N]^T$, $\underline{\mathbf{P}}^{*}  \triangleq [P_1^* \hspace*{1mm} P_2^* \ldots P_N^*]^T$.\\
{\bf Proof of Theorem \ref{thm:noQflowRload}:} For clarification, $\mathbf{R_v} := diag(R_{v1} \hspace*{1mm} R_{v2} \ldots R_{vN})$.
Also, $\underline{\mathbf{V}}_{\mathbf{ref}}$ can be written as $\bm{E}\mu(s)$, where $\mu(s)=\frac{\omega_o}{s^2+\omega_o^2}$ is the Laplace transform of a sinusoid with frequency $\omega_o$.
Using the Woodbury matrix identity and matrix manipulation:
\begin{eqnarray*}
{\mathbf{Y}(s)}^{-1}
&=&\bLam^{-1}+\eta\bone\bone^{T},\ \ \eta:={h(s)}{1-h(s)
\blam^T\bLam^{-1}\blam}^{-1}
\end{eqnarray*}
\begin{eqnarray*}
\text{Therefore,} \hspace*{2mm}
\underline{\mathbf{I}}(s)&=&(\mathbf{Y}(s)^{-1}+\mathbf{R_v}I)^{-1}\bm{E}\mu(s)\\
&=&\big(\underbrace{(\Lambda^{-1}+\mathbf{R_v}I)}_{\bLam_v^{-1}}+\eta \bm{1}\bm{1}^T\big)^{-1}\bm{E}\mu(s)\\
&=& [\bLam_v-\eta\frac{\bLam_v\bm{1}\bm{1}^T\bLam_v)}{1+\eta \bm{1}^T\bLam_v\bm{1}}] \bm{E}\mu(s)\\
&=& \bLam_v\bE\mu(s)-\alpha \blam_v\blam_v^T\bE\mu\\
\textrm{where}\hspace*{2mm}{\alpha^{-1}}
&=& Z_L^{-1}+\sum _m ({r_m+R_{vm}})^{-1}\\
\blam_v &=&\big(\begin{array}{lll}\frac{1}{r_1+R_{v1}}&\ldots&\frac{1}{r_N+R_{vN}}\end{array}\big)^T\\ \bLam_v &=&diag\bigg(\frac{1}{r_1+R_{v1}},\ldots,\frac{1}{r_N+R_{vN}}\bigg)
\end{eqnarray*}
Thus, it follows that for the $k^{th}$ inverter in the steady state:
\[\begin{array}{lll}
i_k(t) &=& I_k\sin(\omega_o t+\phi_k)\\
\end{array}
\]
 where:
 \[I_k = \bigg(\sum_{m=1}^N \frac{E_m}{r_m+R_{vm}}\bigg)\frac{\gamma_k}{r_k+R_{vk}}\]
\[\gamma_k=\sqrt{\big(\beta_k-|\alpha|\cos\angle \alpha\big)^2+|\alpha^2|\sin^2\angle\alpha}\] \[\beta_k={E_k}({\sum_{m=1}^N {E_m}({r_m+R_{vm}})^{-1})^{-1}}\]
\[\cos\phi_k=({\beta_k-|\alpha(j\omega_o)|\cos\angle \alpha(j\omega_o)})/{\gamma_k};\]
\[\sin \phi_k=({-|\alpha(j\omega_o)|\sin\angle \alpha(j\omega_o)})/{\gamma_k};\]
By denoting $|\alpha|$ = $|\alpha(j\omega_o)|$ and $\angle \alpha$ = $\angle \alpha(j\omega_o)$,
\[\tan \phi_k=\displaystyle \frac{-|\alpha|\sin\angle \alpha}{\beta_k-|\alpha|\cos\angle \alpha(j\omega_o)}=\displaystyle \frac{\sin\angle \alpha}{\cos\angle \alpha -\frac{\beta_k}{|\alpha|}}\]
Now, from Theorem \ref{thm:VfollowVREFandI}:
\[\begin{array}{lll}
v_k(t) &= V_{refk}(t) - R_{vk}i_k(t)
= E_k\sin(\omega_ot+\psi_k)
\end{array}\]
\[\textrm{where }V_k = \sqrt{(E_k - R_{vk}I_k\cos\phi_k)^2 + (R_{vk}I_k\sin\phi_k)^2}\]
\[\textrm{ and }\tan\psi_k = \frac{-R_{vk}I_k\sin\phi_k}{E_k - R_{vk}I_k\cos\phi_k}=\frac{\sin\phi_k}{\cos\phi_k - \frac{E_k}{R_{vk}I_k}}\]\\
{\bf Proof of Theorem \ref{thm:diffInPhaseOfCurrents}:} From the expression of $\tan\phi_k$ deduced in the previous proof. Now, $\tan(\phi_k-\phi_j) $:
\[\begin{array}{lll}
&= \dfrac{\dfrac{\sin\angle \alpha}{\cos\angle \alpha-{\beta_k}/{|\alpha|}}-\dfrac{\sin\angle \alpha}{\cos\angle \alpha-{\beta_j}/{|\alpha|}}}{1+\displaystyle\prod_{\substack{
 l=k,j}}\bigg(\dfrac{\sin\angle \alpha}{\cos\angle \alpha-{\beta_l}/{|\alpha|}}\bigg)}
 \end{array}\]
 \[\begin{array}{lll}
&= \dfrac{\big({(\beta_k-\beta_j)}/{|\alpha|}\big)\sin \angle \alpha}{\displaystyle\prod_{\substack{
 l=k,j}} \bigg(\cos\angle \alpha -{\beta_l}/{|\alpha|}\bigg)+\sin^2\angle \alpha}
  \end{array}\]
 \[\begin{array}{lll}
&=\dfrac{{(\delta_k-\delta_j)\sin\angle \alpha}/({|\alpha| \blam_v^T(\bone+\bdelta)})}{\displaystyle\prod_{\substack{
 l=k,j}} \bigg(\cos\angle \alpha -{\beta_l}/{|\alpha|}\bigg)+\sin^2\angle \alpha}
  \end{array}\]
 \[\begin{array}{lll}
&= \dfrac{{(\delta_k-\delta_j)\sin\angle \nu}/({|\nu| (1+\bxi^T\bdelta)})}{\displaystyle\prod_{\substack{
 l=k,j}}\bigg(\cos\angle \nu-\dfrac{1+\delta_l}{(1+\bxi^T\bdelta)|\nu|}\bigg)+\sin^2\angle \nu} \vspace*{1mm}\\
&=\dfrac{\operatorname{Im}(\nu){(\delta_k-\delta_j)}/{ (1+\bxi^T\bdelta)}}{\displaystyle\prod_{\substack{
 l=k,j}} \bigg(\operatorname{Re}(\nu)-\dfrac{1+\delta_l}{1+\bxi^T\bdelta}\bigg)     +\operatorname{Im}^2(\nu)}\\
\end{array}\]
where $\beta_k={E_k}/{\blam_v^T\bE}={E^*(1+\delta_k)}/{\blam_v^T(1+\bdelta)E^*}, \bxi={\blam_v}/{\blam_v^T\bone},\hspace*{2mm} \nu:=\alpha\blam_v^T\bone\}$. Note that:  
\[\operatorname{Im}(\nu)=\operatorname{Im}(\nu -1)\approx -\operatorname{Im}\Big({1}/{Z_L\blam_v^T\bone}\Big)\]
\[1-\operatorname{Re}(\nu)=\operatorname{Re}\left({1}/{Z_L\blam_v^T\bone}\right)\]

\ifCLASSOPTIONcaptionsoff
  \newpage
\fi
\bibliographystyle{IEEEtran}
\bibliography{biblio} 

\begin{thebibliography}{10}
\providecommand{\url}[1]{#1}
\csname url@samestyle\endcsname
\providecommand{\newblock}{\relax}
\providecommand{\bibinfo}[2]{#2}
\providecommand{\BIBentrySTDinterwordspacing}{\spaceskip=0pt\relax}
\providecommand{\BIBentryALTinterwordstretchfactor}{4}
\providecommand{\BIBentryALTinterwordspacing}{\spaceskip=\fontdimen2\font plus
\BIBentryALTinterwordstretchfactor\fontdimen3\font minus
  \fontdimen4\font\relax}
\providecommand{\BIBforeignlanguage}[2]{{%
\expandafter\ifx\csname l@#1\endcsname\relax
\typeout{** WARNING: IEEEtran.bst: No hyphenation pattern has been}%
\typeout{** loaded for the language `#1'. Using the pattern for}%
\typeout{** the default language instead.}%
\else
\language=\csname l@#1\endcsname
\fi
#2}}
\providecommand{\BIBdecl}{\relax}
\BIBdecl

\bibitem{marzal_current_2018}
S.~Marzal, R.~Salas, R.~{Gonz\'alez-Medina}, G.~Garcer\'a, and E.~Figueres,
  ``Current challenges and future trends in the field of communication
  architectures for microgrids,'' \emph{Renewable and Sustainable Energy
  Reviews}, vol.~82, pp. 3610--3622, Feb. 2018.

\bibitem{vasquez2009voltage}
J.~C. Vasquez, R.~A. Mastromauro, J.~M. Guerrero, and M.~Liserre, ``Voltage
  support provided by a droop-controlled multifunctional inverter,'' \emph{IEEE
  Transactions on Industrial Electronics}, vol.~56, no.~11, pp. 4510--4519,
  2009.

\bibitem{blaabjerg2006overview}
F.~Blaabjerg, R.~Teodorescu, M.~Liserre, and A.~V. Timbus, ``Overview of
  control and grid synchronization for distributed power generation systems,''
  \emph{IEEE Transactions on industrial electronics}, vol.~53, no.~5, pp.
  1398--1409, 2006.

\bibitem{patel_distributed_2017}
S.~Patel, S.~Attree, S.~Talukdar, M.~Prakash, and M.~V. Salapaka, ``Distributed
  apportioning in a power network for providing demand response services,'' in
  \emph{{{IEEE International Conference}} on {{Smart Grid Communications}}
  ({{SmartGridComm}})}, Oct. 2017, pp. 38--44.

\bibitem{lundstrom_fast_2018}
B.~Lundstrom, S.~Patel, S.~Attree, and M.~V. Salapaka, ``Fast {{Primary
  Frequency Response}} using {{Coordinated DER}} and {{Flexible Loads}}:
  {{Framework}} and {{Residential}}-scale {{Demonstration}},'' in \emph{{{IEEE
  Power Energy Society General Meeting}} ({{PESGM}})}, Aug. 2018, pp. 1--5.

\bibitem{guerrero_hierarchical_2011}
J.~M. Guerrero, J.~C. Vasquez, J.~Matas, L.~G. de~Vicuna, and M.~Castilla,
  ``Hierarchical {{Control}} of {{Droop}}-{{Controlled AC}} and {{DC
  Microgrids}}\textemdash{{A General Approach Toward Standardization}},''
  \emph{IEEE Transactions on Industrial Electronics}, vol.~58, no.~1, pp.
  158--172, Jan. 2011.

\bibitem{chandorkar_control_1993}
M.~C. Chandorkar, D.~M. Divan, and R.~Adapa, ``Control of parallel connected
  inverters in standalone {{AC}} supply systems,'' \emph{IEEE Transactions on
  Industry Applications}, vol.~29, no.~1, Jan. 1993.

\bibitem{guerrero2005output}
J.~M. Guerrero, L.~G. De~Vicuna, J.~Matas, M.~Castilla, and J.~Miret, ``Output
  impedance design of parallel-connected ups inverters with wireless
  load-sharing control,'' \emph{IEEE Transactions on industrial electronics},
  vol.~52, no.~4, pp. 1126--1135, 2005.

\bibitem{yao2011design}
W.~Yao, M.~Chen, J.~Matas, J.~M. Guerrero, and Z.-M. Qian, ``Design and
  analysis of the droop control method for parallel inverters considering the
  impact of the complex impedance on the power sharing,'' \emph{IEEE
  Transactions on Industrial Electronics}, vol.~58, no.~2, pp. 576--588, 2011.

\bibitem{vandoorn2012automatic}
T.~Vandoorn, B.~Meersman, J.~De~Kooning, J.~Guerrero, and L.~Vandevelde,
  ``Automatic power sharing modification of p/v droop controllers in
  low-voltage resistive microgrids,'' \emph{IEEE Transactions on Power
  Delivery}, vol.~27, no.~4, pp. 2318--2325, 2012.

\bibitem{sao2008control}
C.~K. Sao and P.~W. Lehn, ``Control and power management of converter fed
  microgrids,'' \emph{IEEE Transactions on Power Systems}, vol.~23, no.~3, pp.
  1088--1098, 2008.

\bibitem{engler_droop_2005}
A.~Engler and N.~Soultanis, ``Droop control in {{LV}}-grids,'' in \emph{2005
  {{International Conference}} on {{Future Power Systems}}}, Nov. 2005, p.~6.

\bibitem{guerrero_resistive_2007}
J.~M. Guerrero, J.~Matas, L.~Garcia~de Vicuna, M.~Castilla, and J.~Miret,
  ``\BIBforeignlanguage{en}{Decentralized {Control} for {Parallel} {Operation}
  of {Distributed} {Generation} {Inverters} {Using} {Resistive} {Output}
  {Impedance}},'' \emph{\BIBforeignlanguage{en}{IEEE Transactions on Industrial
  Electronics}}, vol.~54, no.~2, pp. 994--1004, Apr. 2007.

\bibitem{droopnew1}
N.~M. Dehkordi, N.~Sadati, and M.~Hamzeh, ``Robust tuning of transient droop
  gains based on kharitonov's stability theorem in droop-controlled
  microgrids,'' \emph{IET Generation, Transmission \& Distribution}, vol.~12,
  no.~14, pp. 3495--3501, 2018.

\bibitem{droopnew2}
Z.~Afshar, M.~Mollayousefi, S.~M.~T. Bathaee, M.~T. Bina, and G.~B.
  Gharehpetian, ``A novel accurate power sharing method versus droop control
  include autonomous microgrids with critical loads,'' \emph{IEEE Access},
  vol.~7, pp. 89\,466--89\,474, 2019.

\bibitem{droopnew3}
D.~K. Dheer, Y.~Gupta, and S.~Doolla, ``A self adjusting droop control strategy
  to improve reactive power sharing in islanded microgrid,'' \emph{IEEE
  Transactions on Sustainable Energy}, 2019.

\bibitem{droopnew4}
Z.~Peng, J.~Wang, D.~Bi, Y.~Wen, Y.~Dai, X.~Yin, and Z.~J. Shen, ``Droop
  control strategy incorporating coupling compensation and virtual impedance
  for microgrid application,'' \emph{IEEE Transactions on Energy Conversion},
  vol.~34, no.~1, pp. 277--291, 2019.

\bibitem{droopnew5}
R.-J. Wai, Q.-Q. Zhang, and Y.~Wang, ``A novel voltage stabilization and power
  sharing control method based on virtual complex impedance for an off-grid
  microgrid,'' \emph{IEEE Transactions on Power Electronics}, vol.~34, no.~2,
  pp. 1863--1880, 2018.

\bibitem{6200347}
J.~{Rocabert}, A.~{Luna}, F.~{Blaabjerg}, and P.~{Rodríguez}, ``Control of
  power converters in ac microgrids,'' \emph{{IEEE} Transactions on Power
  Electronics}, vol.~27, no.~11, pp. 4734--4749, Nov 2012.

\bibitem{majumder_angle_2009}
R.~Majumder, A.~Ghosh, G.~Ledwich, and F.~Zare, ``Angle droop versus frequency
  droop in a voltage source converter based autonomous microgrid,'' in
  \emph{2009 {{IEEE Power Energy Society General Meeting}}}, Jul. 2009, pp.
  1--8.

\bibitem{shafiee2014distributed}
Q.~Shafiee, J.~M. Guerrero, and J.~C. Vasquez, ``Distributed secondary control
  for islanded microgrids - a novel approach,'' \emph{IEEE Transactions on
  power electronics}, vol.~29, no.~2, pp. 1018--1031, 2014.

\bibitem{golsorkhi2017gps}
M.~S. Golsorkhi, D.~D.-C. Lu, and J.~M. Guerrero, ``A gps-based decentralized
  control method for islanded microgrids,'' \emph{IEEE Transactions on Power
  Electronics}, vol.~32, no.~2, pp. 1615--1625, 2017.

\bibitem{riverso2015plug}
S.~Riverso, F.~Sarzo, G.~Ferrari-Trecate \emph{et~al.}, ``Plug-and-play voltage
  and frequency control of islanded microgrids with meshed topology.''
  \emph{IEEE Trans. Smart Grid}, vol.~6, no.~3, pp. 1176--1184, 2015.

\bibitem{packard1996gps}
H.~Packard, ``Gps and precision timing applications,'' application note 1272,
  Tech. Rep., 1996.

\bibitem{yazdani_voltage-sourced_2010}
A.~Yazdani and R.~Iravani, \emph{\BIBforeignlanguage{en}{Voltage-{{Sourced
  Converters}} in {{Power Systems}}}}.\hskip 1em plus 0.5em minus 0.4em\relax
  Hoboken, NJ, USA: {John Wiley \& Sons, Inc.}, Jan. 2010.

\bibitem{li_control_2009}
Y.~W. Li, ``Control and {{Resonance Damping}} of {{Voltage}}-{{Source}} and
  {{Current}}-{{Source Converters With}} {{LC}} {{Filters}},'' \emph{IEEE
  Transactions on Industrial Electronics}, vol.~56, no.~5, pp. 1511--1521, May
  2009.

\bibitem{salapaka_viability_2014-2}
S.~Salapaka, B.~Johnson, B.~Lundstrom, S.~Kim, S.~Collyer, and M.~Salapaka,
  ``Viability and analysis of implementing only voltage-power droop for
  parallel inverter systems,'' in \emph{53rd {{IEEE Conference}} on
  {{Decision}} and {{Control}}}, Dec. 2014, pp. 3246--3251.

\bibitem{4579760}
``\text{IEEE} standard for a precision clock synchronization protocol for
  networked measurement and control systems,'' \emph{IEEE Std 1588-2008}, pp.
  1--300, July 2008.

\bibitem{majumder2009improvement}
R.~Majumder, B.~Chaudhuri, A.~Ghosh, R.~Majumder, G.~Ledwich, and F.~Zare,
  ``Improvement of stability and load sharing in an autonomous microgrid using
  supplementary droop control loop,'' \emph{IEEE transactions on power
  systems}, vol.~25, no.~2, pp. 796--808, 2009.

\bibitem{bellini_robust_2015}
A.~Bellini, S.~Bifaretti, and F.~Giannini, ``A {{Robust Synchronization
  Method}} for {{Centralized Microgrids}},'' \emph{IEEE Transactions on
  Industry Applications}, vol.~51, no.~2, pp. 1602--1609, Mar. 2015.

\bibitem{stabilitygreen}
N.~Pogaku, M.~Prodanovic, and T.~C. Green, ``Modeling, analysis and testing of
  autonomous operation of an inverter-based microgrid,'' \emph{IEEE
  Transactions on power electronics}, vol.~22, no.~2, pp. 613--625, 2007.

\bibitem{stabilitykimball}
M.~Rasheduzzaman, J.~A. Mueller, and J.~W. Kimball, ``An accurate small-signal
  model of inverter-dominated islanded microgrids using $ dq $ reference
  frame,'' \emph{IEEE Journal of Emerging and Selected Topics in Power
  Electronics}, vol.~2, no.~4, pp. 1070--1080, 2014.

\bibitem{circ}
M.~Zhang, B.~Song, and J.~Wang, ``Circulating current control strategy based on
  equivalent feeder for parallel inverters in islanded microgrid,'' \emph{IEEE
  Transactions on Power Systems}, vol.~34, no.~1, pp. 595--605, 2018.

\bibitem{8332112}
``\text{IEEE} standard for interconnection and interoperability of distributed
  energy resources with associated electric power systems interfaces,''
  \emph{IEEE Std 1547-2018}, pp. 1--138, April 2018.

\end{thebibliography}

\end{document}